\documentclass[runningheads]{llncs}

\usepackage[mobile]{eccv}

\usepackage{graphicx}
\usepackage[accsupp]{axessibility}
\usepackage{amsmath}
\usepackage{amssymb}
\usepackage{booktabs}
\usepackage{blindtext}
\usepackage{colortbl}
\usepackage{hyperref}
\usepackage{lipsum}
\usepackage{microtype}
\usepackage{overpic}
\usepackage{pifont}
\usepackage{tikz}
\usepackage{xspace}
\usepackage{xcolor}
\usetikzlibrary{shadows.blur}
\usepackage{rotating}
\usepackage[ruled,vlined,linesnumbered]{algorithm2e}

\definecolor{turquoise}{cmyk}{0.65,0,0.1,0.3}
\definecolor{purple}{rgb}{0.65,0,0.65}
\definecolor{dark_green}{rgb}{0, 0.5, 0}
\definecolor{orange}{rgb}{0.8, 0.6, 0.2}
\definecolor{red}{rgb}{0.8, 0.2, 0.2}
\definecolor{darkred}{rgb}{0.6, 0.1, 0.05}
\definecolor{blueish}{rgb}{0.0, 0.3, .6}
\definecolor{light_gray}{rgb}{0.7, 0.7, .7}
\definecolor{pink}{rgb}{0.9, 0, 0.6}
\definecolor{greyblue}{rgb}{0.25, 0.25, 1}
\definecolor{teal}{rgb}{0.0, 0.4, 0.4}
\definecolor{chocolate}{rgb}{1.0, 0.4, 0.0}

\definecolor{figred}{rgb}{0.8352941176470589, 0.24313725490196078, 0.30980392156862746}
\definecolor{figgreen}{rgb}{0.3070588235294118, 0.5498039215686275, 0.2749019607843137}
\definecolor{figblue}{rgb}{0.19607843137254902, 0.5333333333333333, 0.7411764705882353}

\renewcommand{\paragraph}[1]{\vspace{.5em}\noindent\textbf{#1.}}

\DeclareMathOperator*{\argmin}{arg\,min}

\newcommand{\losst}[1]{\mathcal{L}_\text{#1}}

\newcommand{\real}{\mathbb{R}}
\newcommand{\trainrays}{\mathcal{R}}
\newcommand{\ray}{\mathbf{r}}

\newcommand{\raydirection}{\mathbf{d}}
\newcommand{\radiance}{\mathbf{c}}

\newcommand{\opacity}{\alpha}
\newcommand{\transmittance}{T}

\newcommand{\primal}{\mathbf{p}}
\newcommand{\normal}{\mathbf{n}}

\newcommand{\radius}{r}
\newcommand{\cell}{\mathbf{V}}

\newcommand{\power}{\mathbf{P}}
\newcommand{\powers}{\mathcal{P}}

\newcommand{\displacement}{\mathbf{d}}

\newcommand{\SupplementaryMaterial}{\textcolor{purple}{supplementary material}\xspace}
\newcommand{\Webpage}{\textcolor{purple}{webpage}\xspace}

\newcommand{\mipnerf}{MipNeRF~360~\cite{mipnerf360}\xspace}
\newcommand{\dlthreedv}{DL3DV~\cite{dl3dv}\xspace}

\usepackage{soul}
\setuldepth{foobar}

\newcommand{\cmark}{\textcolor{green}{\ding{51}}} 
\newcommand{\xmark}{\textcolor{red}{\ding{55}}}   

\usepackage[shortlabels,inline]{enumitem}
\setlist[itemize]{noitemsep,leftmargin=*,topsep=0em}
\setlist[enumerate]{noitemsep,leftmargin=*,topsep=0em}

\usepackage{eccvabbrv}

\usepackage{orcidlink}

\begin{document}

\title{Power Foam: Unifying Real-Time Differentiable Ray Tracing and Rasterization} 

\titlerunning{Power Foam}

\author{
Shrisudhan Govindarajan\thanks{Equal contribution}$^{,\dagger}$\inst{1,4}\orcidlink{0000-0002-3546-8223} \and
Daniel Rebain\protect\footnotemark[1]\inst{2}\orcidlink{0000-0003-4691-7909} \and
Dor Verbin\inst{3}\orcidlink{0000-0001-8798-3270} \and
Kwang Moo Yi\inst{2}\orcidlink{0000-0001-9036-3822} \and
Anish Prabhu\inst{4}\orcidlink{0009-0007-2813-8457} \and
Andrea Tagliasacchi\inst{1,5}\orcidlink{0000-0002-2209-7187}
}

\authorrunning{S.~Govindarajan et al.}

\institute{
Simon Fraser University \quad 
University of British Columbia \quad
Google Deepmind \and
Google \quad
University of Toronto
}

\maketitle
\begingroup
\renewcommand\thefootnote{$\dagger$}
\footnotetext{Work done at Google}
\endgroup

\begin{center}
    Project page: \url{https://powerfoam.github.io}
\end{center}

\vspace{-3em}
\begin{figure*}[h]
    \centering
    \begin{overpic}[width=\linewidth]{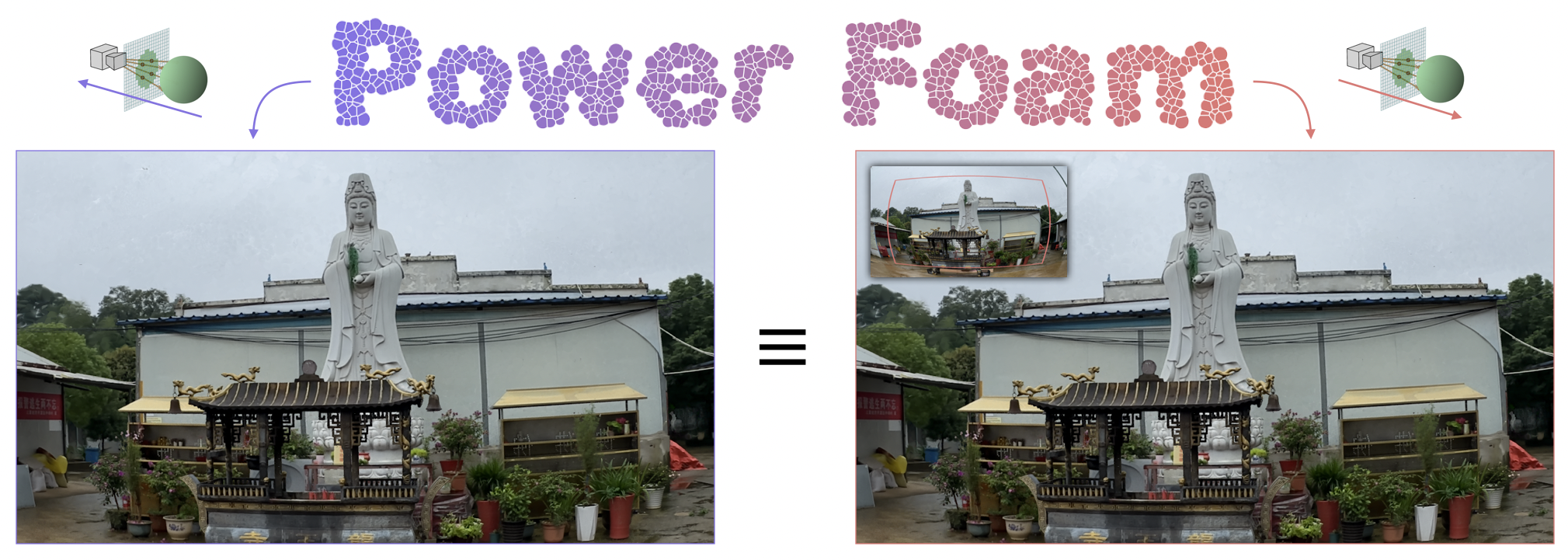}
    \put(47, 10){same}
    \put(14,-1.25){\textcolor[rgb]{0.5071, 0.45, 0.9}{rasterized image}}
    \put(68, -1.25){\textcolor[rgb]{0.9, 0.45, 0.45}{ray traced image}}
    \begin{turn}{344}
        \put(-4.75, 28.75){\textcolor[rgb]{0.5071, 0.45, 0.9}{rasterize}}
        \put(72.25, 51){\textcolor[rgb]{0.9, 0.45, 0.45}{ray trace}}
    \end{turn}
    \end{overpic}
    \caption{
    {\bf Teaser -- }
    we introduce a differentiable 3D representation that unifies the flexibility of foam-based ray tracing with the efficiency of modern rasterization pipelines.
    In the center, we illustrate the 2D structure of our bounded power diagram model.
    By utilizing the bounded power diagram with controllable cell extents, our method generates spherically bounded primitives that are highly amenable to tile-based culling, enabling efficient rasterization (left). 
    Simultaneously, our representation preserves the constant-time ray traversal characteristics through the volumetric mesh.
    Our ray traced results, e.g. the fisheye image in the inset (right), are identical to the result of distorting a pinhole render, whereas other methods lose some fidelity in approximation.
    Notably, Power Foam achieves consistent, high-frame-rate outputs under both rendering paradigms.
    }
    \label{fig:teaser}
\vspace{-3em}
\end{figure*}

\begin{abstract}
We introduce a differentiable 3D representation that unifies the ray tracing capabilities of foam-based ray tracing with the efficiency of modern rasterization pipelines. While prior foam representations enable constant-time ray traversal through an explicit volumetric partition of space, their potentially unbounded cells hinder efficient tile-based rasterization. We address this limitation by generalizing Voronoi foams to bounded power diagrams with controllable cell extents, enabling spatially bounded primitives without requiring expensive Delaunay triangulations during training. We further introduce an oriented surface formulation that explicitly models interfaces between interior and exterior regions, and decouple geometry from appearance by embedding differentiable texture directly on these surfaces. Together, these contributions yield a representation that preserves state-of-the-art ray tracing efficiency while achieving rasterization performance competitive with current generation 3DGS, providing a practical path toward unified real-time differentiable rendering.
\end{abstract}    
\section{Introduction}
\label{sec:intro}

Recent advances in differentiable rendering have led to highly optimized scene representations such as 3D Gaussian Splatting (3DGS)~\cite{3dgs}, capable of producing high-quality, photorealistic renderings at faster-than-real-time frame rates.
This success has been enabled by the computational efficiency of renderers based on rasterization, which can draw high-resolution frames with minimal computational cost and form the backbone of most real-time graphics systems, including modern video game engines, which rely on highly optimized rasterization pipelines for interactive performance.

While rasterization-based Gaussian Splatting renders entire images at once, researchers are increasingly exploring ray tracing formulations~\cite{3dgrt,radfoam}.
The ability to evaluate the color of each ray independently enables simulation of complex effects like reflection and refraction (such as those demonstrated by \cite{radfoam}), which are not compatible with rasterization, and can potentially enable more advanced techniques such as Monte Carlo path tracing.

The distinct advantages of these rendering paradigms have motivated the pursuit of a unified formulation.
Recent methods extend Gaussian Splatting to support efficient ray tracing~\cite{3dgrt} and mathematically align the rendering equations to allow hybrid pipelines, where the same representation can be rasterized for speed or traced for complex light transport~\cite{3dgut}.
However, because Gaussian primitives are unstructured and overlap heavily, they do not define a true volumetric partition of space.
Consequently, ray tracing these scenes necessitates the construction and traversal of a Bounding Volume Hierarchy (BVH).
While hardware acceleration has enabled real-time ray tracing using this technique, it fundamentally couples the complexity of ray traversal to the number of primitives in the scene.

Meanwhile, Radiant Foam~\cite{radfoam} models an explicit partition of space via a volumetric mesh, eliminating overlap and 
reducing per-primitive ray traversal from the logarithmic complexity of a BVH to constant time.
This representation thus unlocks highly efficient ray tracing, but would it be possible to achieve a truly unified formulation in which foams can \textit{also} be rasterized efficiently?
The central challenge is that volumetric cells in a foam can become very large or even unbounded, which interferes with the frustum culling and spatial locality assumptions that underpin efficient tile-based rasterization.

Our goal is therefore to design a representation that preserves the constant-time ray traversal complexity of foam-based models while recovering the spatial locality required for efficient rasterization. 
We achieve this through three key contributions.

\begin{itemize}[label=\textbullet]
    \item \textbf{Bounded power diagrams:} In Radiant Foam, space is partitioned using a Voronoi diagram.
Here, we generalize this construction to a restricted (i.e. bounded) power diagram, specifically, the structure dual to the weighted $\alpha$-complex~\cite{edelsbrunner2003shape}, where each site is parametrized with a controllable radius.
This additional degree of freedom allows us to explicitly regulate cell extent, preventing unbounded regions and producing spatially localized cells that are amenable to efficient tile-based rasterization, while avoiding the need to construct expensive full Delaunay triangulations during training.
    \item \textbf{Surface modelling:} In Radiant Foam, surfaces emerge implicitly as interfaces between adjacent high- and low-density cells.
In contrast, we introduce an oriented point representation that explicitly partitions each cell into interior and exterior regions, yielding a more direct representation of surfaces.
    \item \textbf{Decoupling geometry/appearance:} In Radiant Foam, representing highly textured regions often requires increasing the number of volumetric cells.
Instead, we model a differentiable texture directly on the surfaces induced by oriented points, disentangling geometry from appearance and significantly reducing the required cell budget.
\end{itemize}

Together, these contributions yield a representation that unifies the strengths of both paradigms: it retains the constant-time ray traversal and state-of-the-art efficiency of foam-based ray tracing, while introducing the spatial localization and surface structure necessary for rasterization performance competitive with current generation 3DGS.
\section{Related Work}
\label{sec:related}
While early coordinate-based approaches such as Scene Representation Networks (SRNs)~\cite{srn} and Neural Volumes~\cite{neuralvolumes} pioneered the use of differentiable rendering for neural scene modeling, Neural Radiance Fields (NeRF)~\cite{nerf} significantly advanced this paradigm by enabling high-fidelity volumetric reconstruction of complex, large-scale datasets.
Despite the exceptional photorealism achieved by NeRF, its reliance on dense MLP evaluations along each ray remains computationally prohibitive for real-time applications. 
Subsequent research has attempted to address these latency issues by incorporating hybrid structures such as voxel grids~\cite{plenoxels}, multi-resolution hash tables~\cite{instantngp} and specialized sampling schemes~\cite{mipnerf, mipnerf360}. 
Nevertheless, the requirement of ray marching remains a significant bottleneck for Monte Carlo ray tracing methods, which require the evaluation of numerous paths per pixel to resolve complex secondary effects and global illumination.

\paragraph{Particle-Based Rasterization}
The emergence of 3D Gaussian Splatting (3DGS)~\cite{3dgs} moved away from continuous volumes toward unstructured particle-based representations. 
Using a tile-based rasterization approach, 3DGS projects anisotropic Gaussians onto the image plane, enabling extremely high frame rates on modern GPUs. 
This efficiency stems from the ability to sort and alpha-blend primitives within localized tiles. 
Building on this success, several variations have been proposed to improve representation and optimization~\cite{3dgs-mcmc, betasplatting}. 
However, this rasterization-first design is inherently limited: it lacks the spatial connectivity required for efficient ray traversal, making it difficult to resolve effects like  shadows, reflections, or secondary bounces without significant architectural modifications.

\paragraph{Particle-Based Ray Tracing}
Ray tracing particles requires efficient intersection testing, which is challenging for unstructured collections of Gaussians. 
To address this, several works have proposed building hierarchical acceleration structures, such as Bounding Volume Hierarchies (BVH), over the splatting primitives~\cite{3dgrt, ever}. 
While these structures enable ray-primitive intersections, the heavily overlapping nature of Gaussians leads to high depth complexity and redundant computations during traversal.

Alternative structures like Radiant Foam~\cite{radfoam} attempt to solve this by using volumetric partitions. 
This method allows for ``constant-time'' neighbor-to-neighbor traversal, which is significantly faster than hierarchical search for continuous ray paths. 
However, these volumetric meshes often contain unbounded cells, which limits the ability to effectively perform frustum culling for efficient rasterization—a constraint we address through our \textit{localized} Power Diagram.

\paragraph{Unified Representations}
The pursuit of a unified representation aims to integrate the tile-based rasterization efficiency of 3DGS with the topological connectivity of volumetric meshes. 
Contemporary models such as 3DGUT~\cite{3dgut} have explored this unification using Gaussian primitives, but rely on BVH-based ray-traversal as they lack inherent primitive connectivity. 
Similarly, Radiance Meshes~\cite{radiancemeshes} utilize tetrahedral meshes that support constant-complexity ray traversal, yet the authors nonetheless resort to BVH-based ray tracing inference.
Consequently, both methods incur significant rendering speed overhead compared to the adjacency-based traversal demonstrated by Radiant Foam~(for example, a 2.5$\times$ speedup from 3DGRT to Radiant Foam).
Our proposed representation addresses this performance gap by providing the finite spatial bounds necessary for efficient rasterization while preserving the high-speed, neighbor-to-neighbor traversal benefits of volumetric foams.

\section{Method}
\label{sec:method}

\begin{figure}[t]
  \centering
  \begin{overpic}[width=\linewidth]{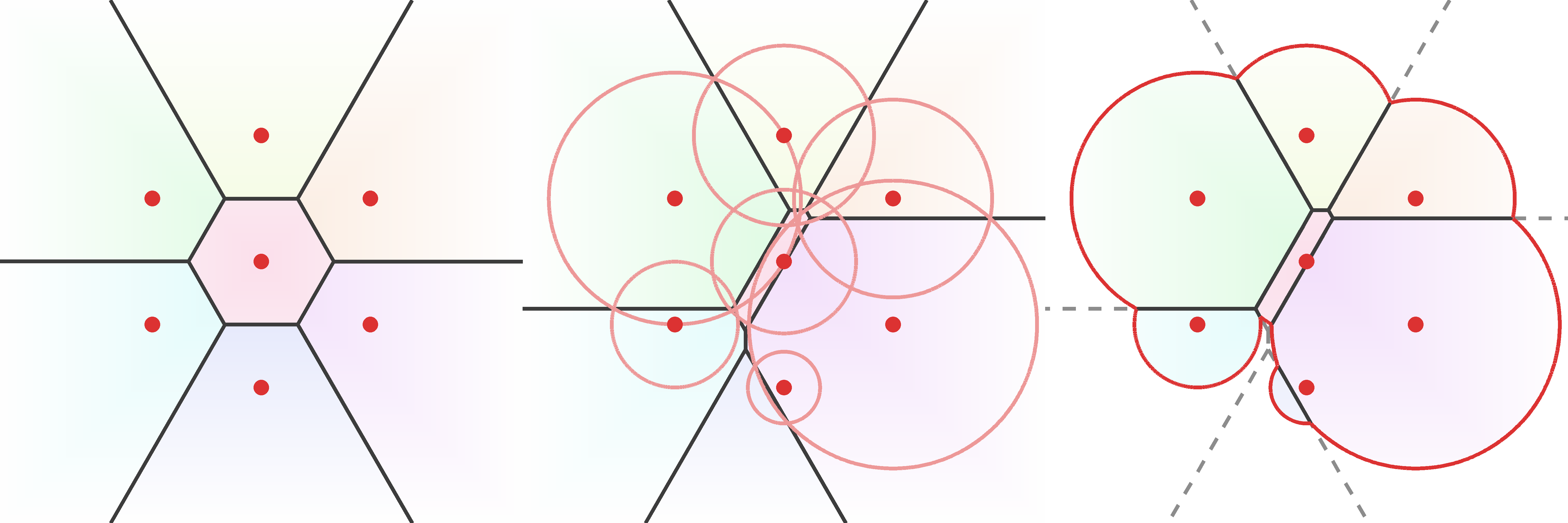}
  \end{overpic}
  \caption{\textbf{Comparison of volumetric mesh types} -- the Voronoi diagram (left) constructs cell faces from planes equidistant to the cell sites, while the power diagram (center) constructs them based on the radii associated with each cell. By using these radii to define bounding spheres for each cell (right), we can ensure that all parts of the cell boundaries will have gradients with respect to all cell parameters.}
  \label{fig:dual_meshes}
\end{figure}

\subsection{Preliminaries: Radiant Foam}
Our method extends Radiant Foam~\cite{radfoam}, so we begin with a brief review:
Radiant Foam models the 3D scene using a non-overlapping, differentiable volumetric mesh.
The volumetric mesh is constructed as a Voronoi diagram that partitions the scene into convex polyhedral cells $\cell_1, ..., \cell_N \subseteq \real^3$.
These cells are generated by a set of sites $\primal_1, ..., \primal_N \in \real^3$, which simultaneously serve as the vertices of the dual Delaunay triangulation.
Each cell $\cell_i$ comprises the region of 3D space closest to its respective site $\primal_i$:
\begin{align}
    \cell_i = \{\mathbf{x} \in \real^3 : \argmin_j ||\mathbf{x} - \primal_j|| = i\}.\label{eq:voronoi}
\end{align}
To support volume rendering, Radiant Foam equips each cell with a learnable volume density value and a set of RGB Spherical Harmonic (SH) coefficients used for modeling the view-dependent color of the cell.
The standard volume rendering integral can then be evaluated exactly as a sum over the segments of intersection between a ray and the Voronoi cells~\cite{digest}.

Ray tracing is highly efficient in this representation because Voronoi cells are \textit{convex} and share faces. 
The adjacency information provided by the dual Delaunay triangulation allows a ray to ``walk'' from one cell to the next by checking all faces, each corresponding to a Delaunay edge, to find where the ray crosses into the next cell.
Govindarajan \etal~\cite{radfoam} show that this process runs in amortized constant time per transition, as the expected number of neighbors for any cell depends only on the number of spatial dimensions, not the size of the mesh~\cite{meijering1953interface}.

\subsection{Ray Tracing and Rasterizing Bounded Power Diagrams}

Our goal is to construct a representation that can be rasterized as well as ray traced.
This will enable fast rendering using rasterization, while also supporting ray tracing for the flexibility it allows in modeling light transport phenomena like reflection and refraction.
While foam representations are natively amenable to ray tracing, efficient rasterization requires that primitives be bounded.
Ideally, we want bounds which are simple to compute in screen space -- such that we can easily determine which image tiles they intersect -- and which will exclude any pixels that the primitive will never contribute to.
An unbounded foam structure fails both of these tests; testing for intersection with image tiles requires an unwieldy computation of the hull of the projected convex in screen space, which may also include large areas where the cell is completely occluded.

The easiest solution to this issue is to restrict each Voronoi foam cell to its intersection with a rasterization-friendly bounding primitive such as a sphere.
Unfortunately, naïvely adding learnable bounding primitives to each cell for this purpose would create new problems, such as lacking gradients when the bound is much larger than the cell (or vice-versa), as well as faces being induced between cells despite their bounds not intersecting.
Thankfully, computational geometry has already devised a structure with ideal properties: the weighted $\alpha$-complex~\cite{edelsbrunner2003shape}, or more specifically, its dual, which we refer to as the \textit{bounded power diagram}; see~\cref{fig:dual_meshes}.

\begin{figure}[t]
  \centering
  \begin{overpic}[width=\linewidth]{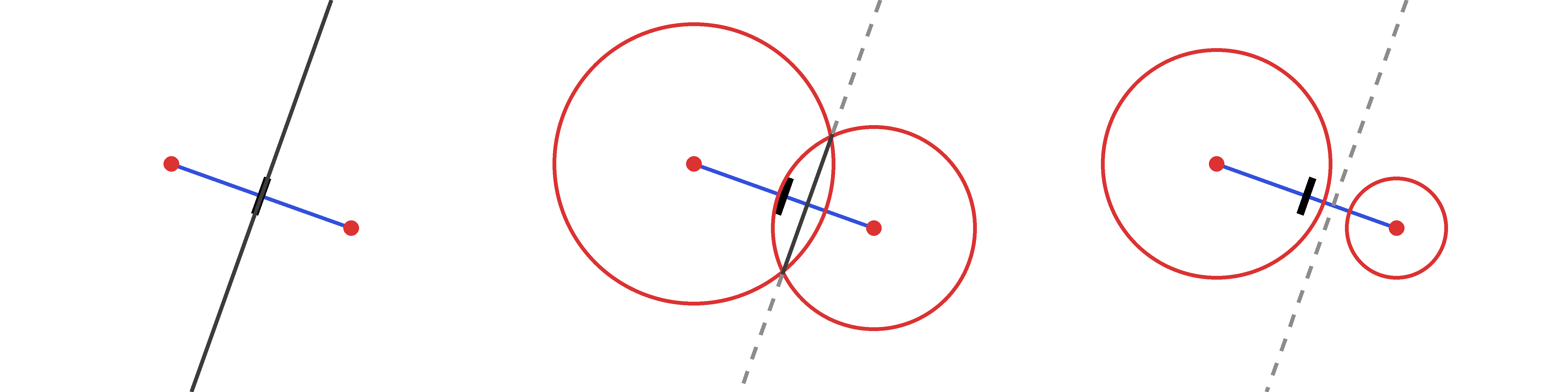}
  \end{overpic}
  \caption{\textbf{Power cell faces depend on radius} -- while the Voronoi diagram faces are always exactly equidistant between sites (left), the faces of the power cell are determined by both sphere centers and radii. Specifically, the power cell face between two neighboring cells lies on the \emph{radical plane} of the two spheres. For overlapping spheres, this plane contains the circle of intersection between them (middle), and for non-overlapping, it always lies \emph{outside} the spheres (right).}
  \label{fig:radical_vs_voronoi}
\end{figure}

The power diagram generalizes the Voronoi diagram by introducing a weighted distance, such that each cell $\power_i$ is parameterized by a primal sites $\primal_i$ and an associated squared radius (also known in the literature as a weight) $\radius_i^2$:
\begin{align}
    \power_i = \{\mathbf{x} \in \real^3 : \argmin_j ||\mathbf{x} - \primal_j||^2 - \radius_j^2 = i\}
\end{align}
This weighted distance is known as the \textit{power} of the point $\mathbf{x}$ with respect to the sphere defined by $\primal_i$ and $\radius_i$.
Making this radius learnable, and also taking it as the radius of a bounding sphere, solves both of the problems mentioned above.
As shown in~\cref{fig:radical_vs_voronoi}, the radius now affects both the cell-to-cell faces and spherical cell boundaries, so it always has gradients.
Also, unlike Voronoi cells, non-intersecting bounded power cells will never induce a face that would cause non-local interaction; see \cref{fig:nonlocal_faces}.

\begin{figure}[t]
  \centering
  \begin{minipage}[t]{0.48\linewidth}
    \begin{overpic}[width=\linewidth]{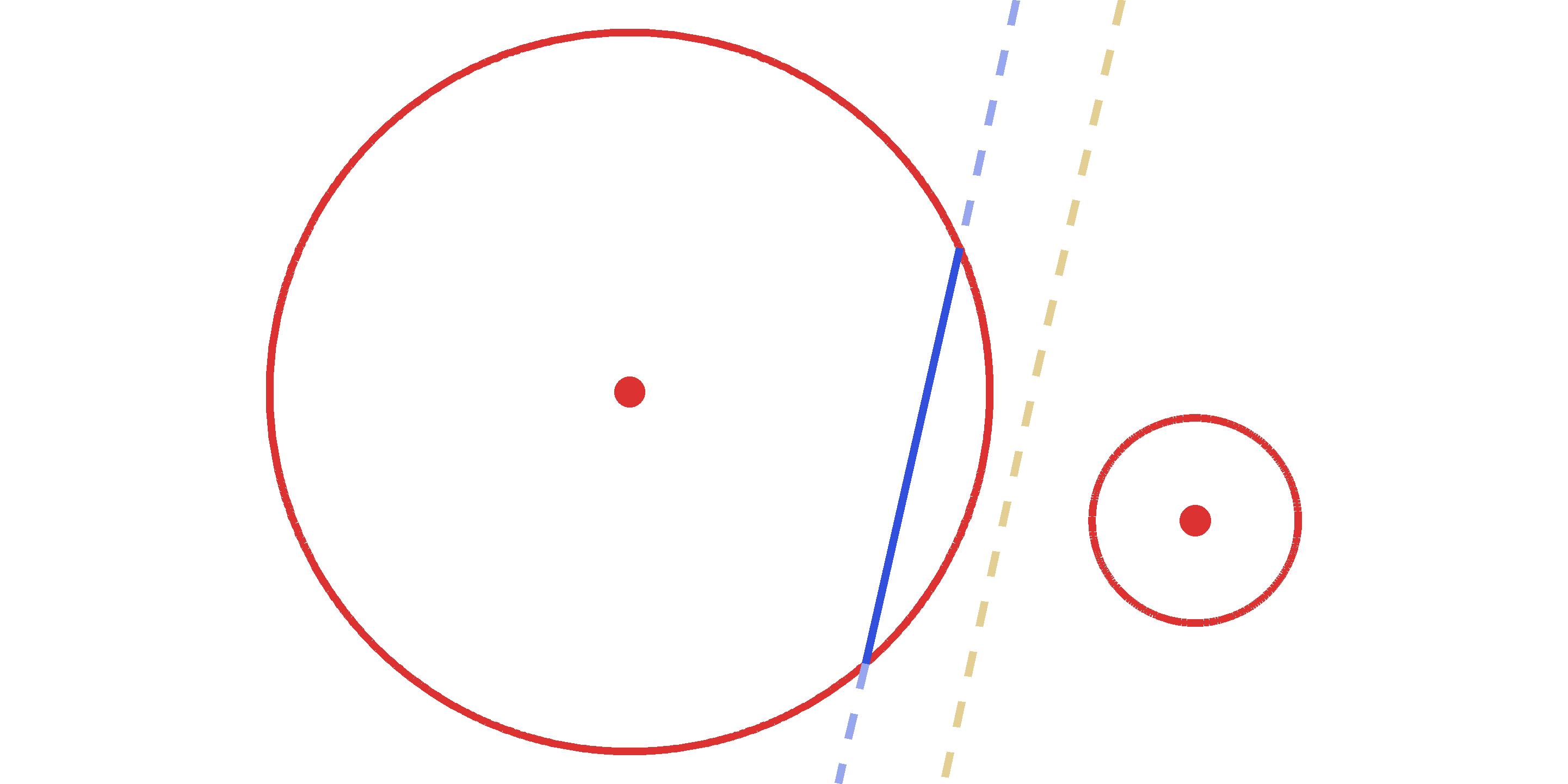}
    \end{overpic}
    \captionof{figure}{\textbf{Avoiding non-local faces with the radical plane} -- while it would be possible to construct bounded cells using the Voronoi diagram, it could create arrangements where non-overlapping cells interact due to intersections of Voronoi faces (blue) with the bounding primitives. In addition to being unintuitive, this behavior would require the use of a full Delaunay adjacency graph in rendering, rather than the \ul{cheaper} \v{C}ech complex. The radical planes which define power faces (gold) can never create these non-local interactions.}
    \label{fig:nonlocal_faces}
  \end{minipage}
  \hfill
  \begin{minipage}[t]{0.48\linewidth}
    \begin{overpic}[width=\linewidth]{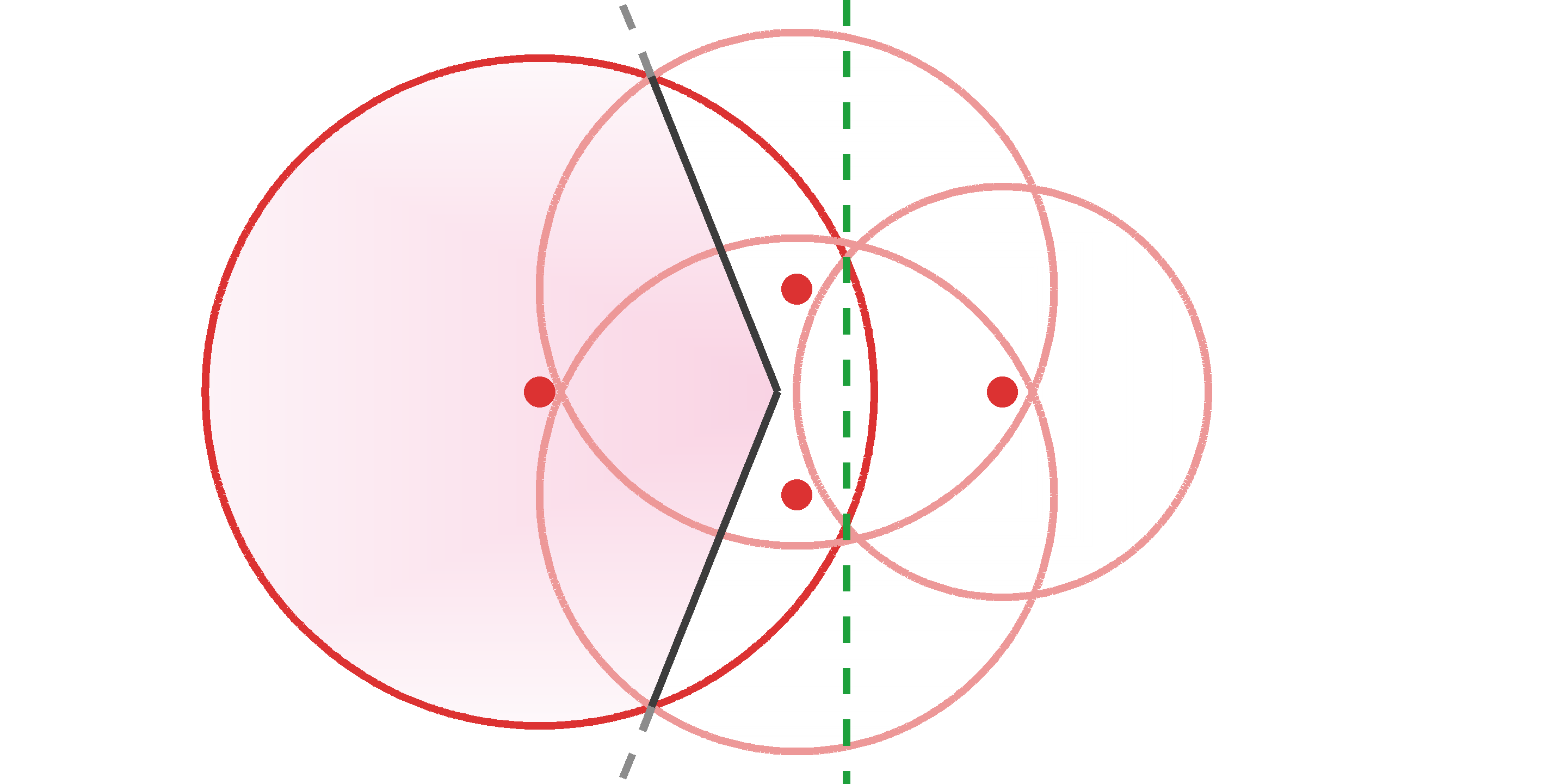}
    \end{overpic}
    \captionof{figure}{\textbf{Equivalence of rendering with the $\alpha$-complex and \v{C}ech complex} -- the dual graph of the bounded power diagram representation is the $\alpha$-complex, which is required during rendering to check for ray-face intersections. We can, however, avoid the cost of computing the $\alpha$-complex by taking advantage of the fact that its supersets, including the \v{C}ech complex, add spurious radical planes (green) that always lie entirely outside the cell, and thus have no effect on the result of rendering.}
    \label{fig:cech_equivalence}
  \end{minipage}
\end{figure}
Similarly to Radiant Foam, we can associate each bounded power cell with a density and directional radiance and compute pixel colors using the volume rendering equation; the real task of the rendering algorithm at this point is to enumerate the ray-cell intersections.
For ray tracing, the cell-to-cell traversal strategy of Radiant Foam still applies, and requires only that the Delaunay adjacency graph be replaced with the dual graph of the power diagram, in addition to considering the sphere bounds in the computation of intersection lengths.
We also observe improvement in ray tracing performance with the addition of Steiner points\footnote{Steiner points are additional points inserted into the triangulation to improve its quality; in our case to reduce the average number of neighbors for any given cell.}~\cite{courant1996mathematics}; see the \SupplementaryMaterial).

Rasterization of bounded power cells replaces per-ray traversal with a global sort by depths of the cell sites, similar to 3DGS.
However, unlike 3DGS, where the heuristic depth-sorting of semi-transparent splats inherently introduces view-dependent instability, our choice of power diagram as a parameterization of cells guarantees they are arranged such that \ul{no popping artifacts occur}.
This property was proven for Voronoi cells by Rebain et al.~\cite{derf}, and we extend this proof to power diagrams in the \SupplementaryMaterial.
In addition to completely avoiding a failure mode of splatting renderers that numerous papers~\cite{stopthepop, hahlbohm2025efficient, stochasticsplats, aaa, ever, condor2025don} have been dedicated to solving, this feature of our representation also enables lossless rasterization with non-pinhole cameras, such as the fisheye camera shown in~\cref{fig:teaser}; see the \Webpage.

\begin{figure}[t]
  \centering
  \begin{overpic}[width=\linewidth]{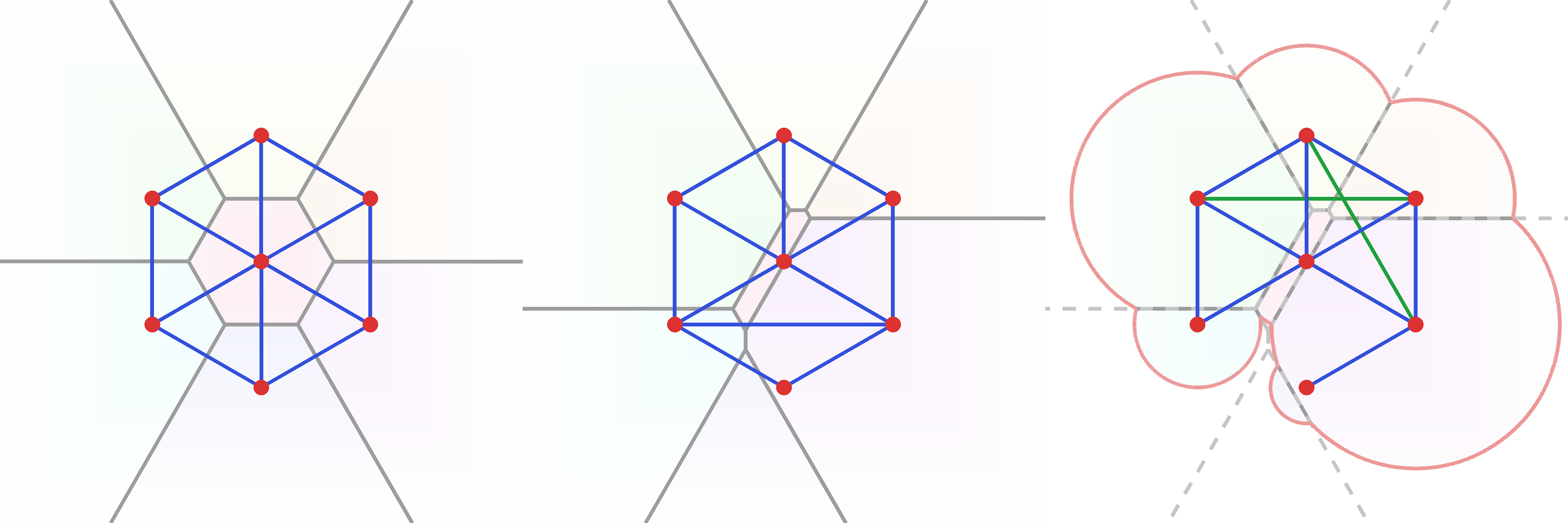}
  \end{overpic}
  \caption{\textbf{Comparison of adjacency graphs} -- Radiant Foam relied on computing the Delaunay triangulation (left) to provide the adjacency graph of its cells. While an unbounded power diagram would require a similar computation of a regular triangulation (center), the bounded power diagram requires only the $\alpha$-complex (right, blue), which excludes edges corresponding to non-overlapping spheres. We can also save computation by instead building the \v{C}ech complex -- the graph of all overlapping spheres -- which is a superset (right, blue+green) of the $\alpha$-complex. This approximation slows rendering slightly, but has \ul{no effect on the correctness} of the rendering; see \cref{fig:cech_equivalence}.}
  \label{fig:primal_meshes}
\end{figure}

The beneficial properties of bounded power cells do, however, come with a cost: we must iterate over the neighbors of each cell during rasterization to compute accurate intersection lengths between cell-to-cell faces.
We could again use the power diagram adjacency graph for this, though doing so during training would require constantly rebuilding a large regular triangulation over all cell sites which is very costly to compute repeatedly during training.
Instead, we rely on a simple geometric guarantee: if the bounding sphere of two cells do not overlap, it is impossible for them to share a face.
Consequently, we can safely exclude graph edges between them.
The subset of remaining valid graph edges -- the $\alpha$-complex of the sphere bounds -- is the minimal graph we can use for rasterization, yet extracting it dynamically remains expensive.

In practice, we find the graph of all overlapping spheres -- the Čech complex -- a better choice, as it contains all edges from the $\alpha$-complex, and is significantly cheaper to construct using GPU-accelerated collision detection, with the only cost being the introduction of some extraneous edges; see~\cref{fig:primal_meshes}.
Crucially, evaluating these false edges during rasterization preserves the exactness of the volume rendering integral, as the extraneous faces induced by them never intersect the true cell and thus don't change the intersection length; see~\cref{fig:cech_equivalence}.
This slight computational overhead of testing false edges slows down rendering during training by only around 10\%, which is vastly outweighed by the acceleration in graph construction.

\subsection{Oriented points representation}
During optimization, we observe that Radiant Foam naturally converges toward a bimodal density distribution, where cells exhibit either high density in occupied regions or near-zero density in empty space. 
This behavior implies that the scene's surface geometry is essentially defined by the sharp transition between these two states. 
However, relying on the interfaces between adjacent cells to capture this geometry is computationally inefficient, as it necessitates the explicit placement of zero-density points and \textit{wastefully} parametrizes these empty cells with model view-dependent appearance.
To eliminate this redundancy, we shift the boundary representation from the interface between cells to a localized interface within each cell. 
We achieve this by introducing an oriented point parameterization. 
This approach is analogous to a physical dipole -- where coupled ``charges'' of high and low density define a localized field -- providing a surface-aligned primitive that can represent both occupied and void space within a single volumetric cell.

Technically, we modify the cell parameterization from a single primal vertex $\primal_i$ to an oriented point defined by a face center $\primal_i$ and a normal vector $\normal_i$. 
This defines an internal ``oriented face'' that bisects the cell into two sub-regions. 
The~``inside'' half-space is assigned a learnable density and radiance, while the~``outside'' half-space is explicitly fixed to zero density. 
This surface-aligned representation reduces parameter redundancy by eliminating radiance parameters in empty space, and does not require changes to the rendering formulation, as it is essentially just the limiting case of two cells with sites that approach zero separation.

\subsection{Disentangling Geometry and Appearance model} 
\begin{figure*}[t]
    \centering
    \begin{overpic}[width=0.4\linewidth]{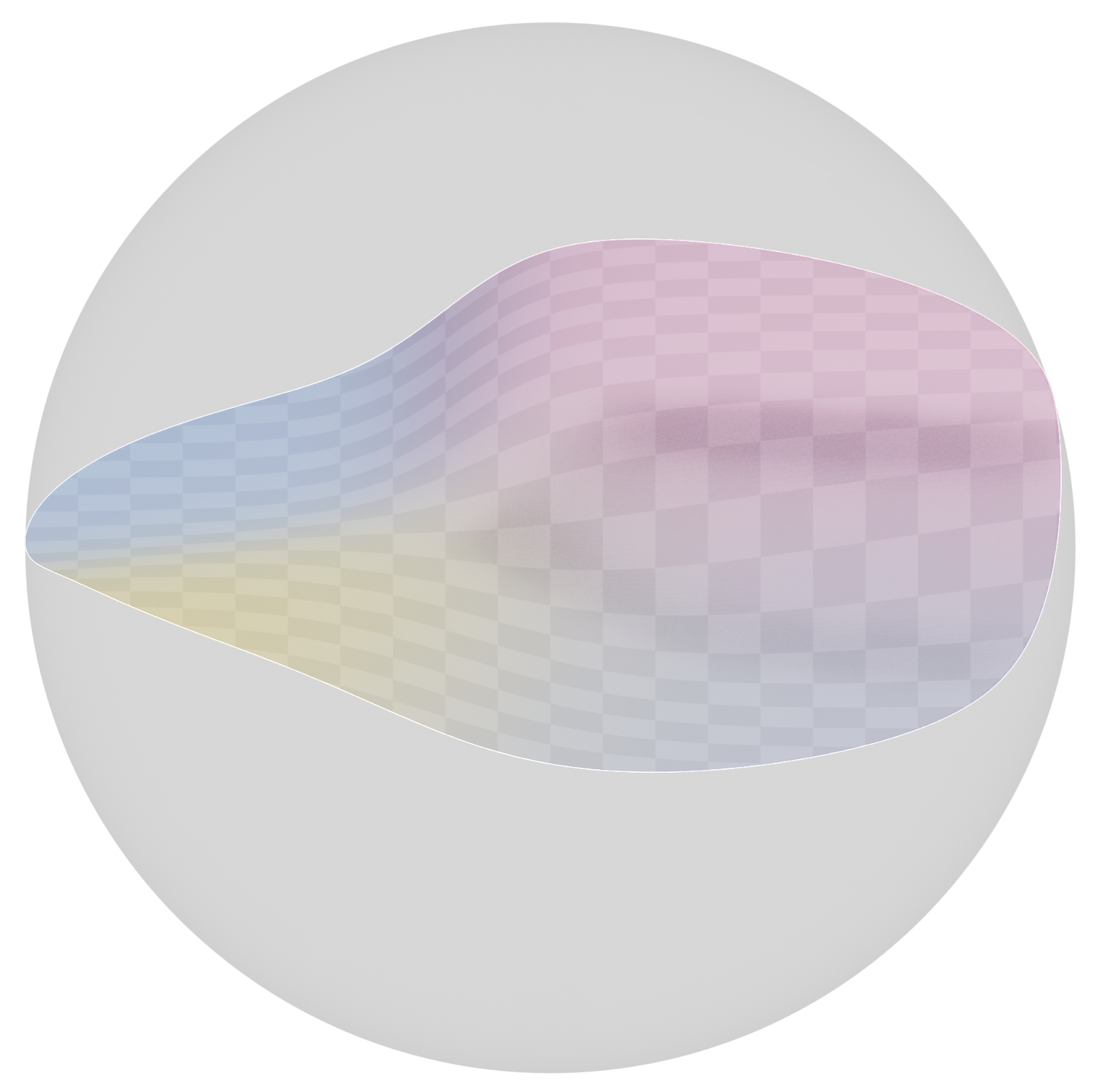}
    \end{overpic}
    \begin{overpic}[width=0.4\linewidth]{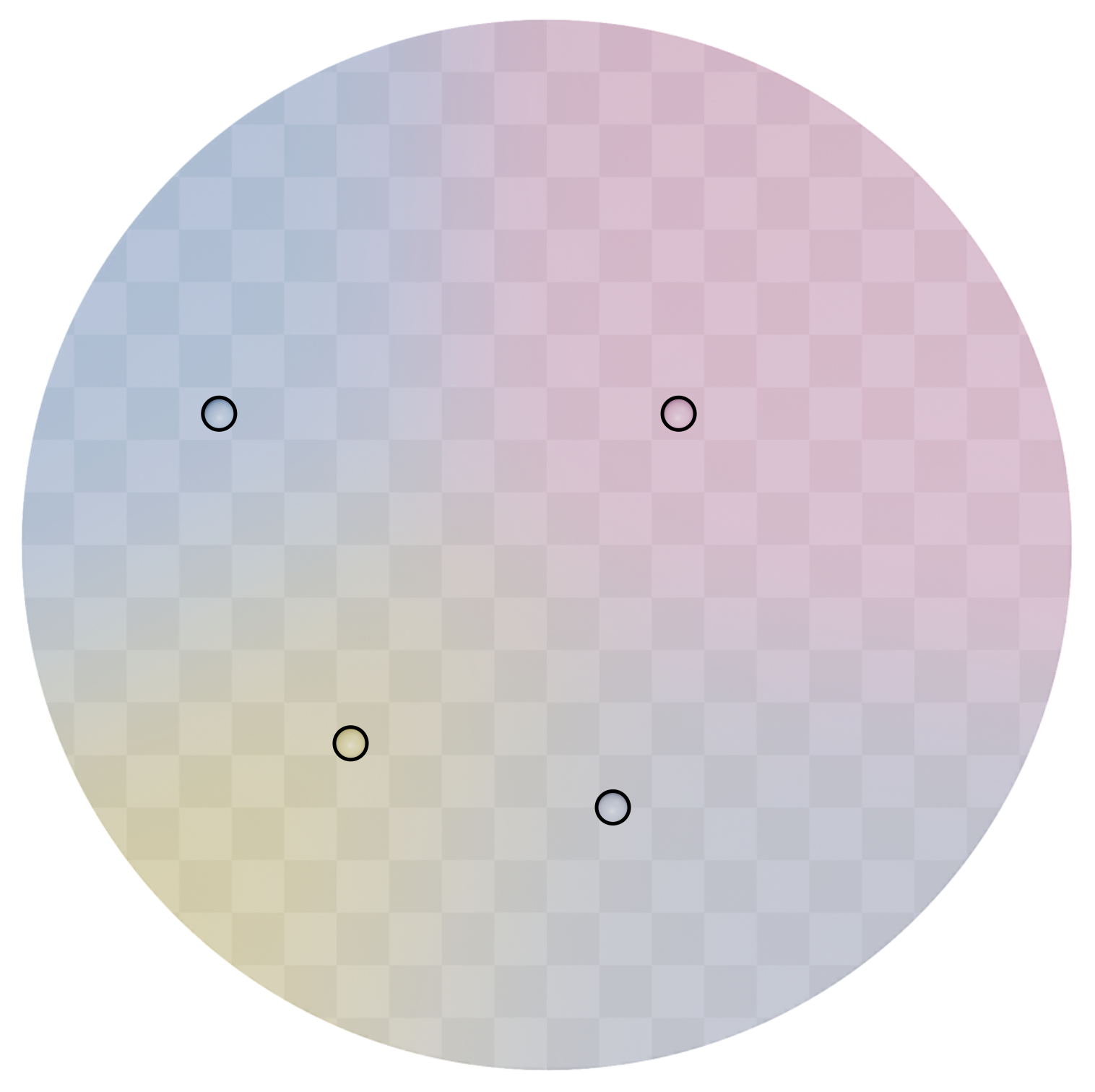}
    \put(12, 24){\scriptsize{($\mathbf{s}_1$, $\displacement_1$, $\radiance_1$)}}
    \put(5, 53){\scriptsize{($\mathbf{s}_2$, $\displacement_2$, $\radiance_2$)}}
    \put(48, 53){\scriptsize{($\mathbf{s}_3$, $\displacement_3$, $\radiance_3$)}}
    \put(45, 17){\scriptsize{($\mathbf{s}_4$, $\displacement_4$, $\radiance_4$)}}
    \end{overpic}
    \caption{
    {\bf Geometry and Appearance model -- }
    in our decoupled geometry and appearance framework, the dipole face acts as a proxy for macro-scale geometry, while detail sites $\mathbf{s}_i$ are optimized to capture high-frequency geometric and appearance details without increasing primitive count. 
    Displacement values $\displacement_i$ associated with each detail site push the surface up or down locally along the axis of the dipole (left).
    The soft Voronoi formulation (\cref{eq:svdisp} and \cref{eq:svrad}) distributes both the displacement and directional radiance $\radiance_i$ associated with each detail site across the dipole plane (right).
    }
    \label{fig:geo_app}
\end{figure*}
In computer graphics, it is a well-recognized principle that geometry and appearance often exhibit distinct frequency characteristics~\cite{texturedgaussians}.
Despite this, particle-based methods often couple geometry and outgoing radiance within a unified set of primitives. 
This coupling often results in primitives deployed to model high-frequency appearance details also redundantly modeling low-frequency macro-geometry, resulting in excessive parameter budgets. 
For example, while the coarse geometry of a textured wall can be efficiently represented by a small number of volumetric primitives, existing methods often utilize an excessive number of primitives with piecewise constant spherical functions to encode intricate textural details.

To address these problems, we propose to disentangle geometry and texture.
Our approach treats localized power cells as low-frequency geometric proxies, while high-frequency geometry and radiance are modeled as a continuous soft Voronoi function defined directly on the dipole faces; see ~\cref{fig:geo_app}.

Specifically, we associate each dipole face with a fixed number $k$ of learnable \textit{detail sites} $\mathbf{s}_i \in \mathbb{R}^2$. 
These sites serve a dual purpose: first, they act as anchors for a learnable displacement field $\displacement_i$, where defining offsets from the dipole faces at these sites allows us to effectively high-frequency geometric details without increasing the primary primitive count. 
Second, these same sites store directional radiance values $\radiance_i$ using Spherical Voronoi (SV) functions~\cite{spherical_voronoi}.

In the following, we describe the procedure for determining ray-dipole face intersections and computing the resulting color for this parameterization. 
To determine the cell radiance and intersection length, we first compute the initial intersection point $\bar{\mathbf{x}}$ between the ray and the base dipole face. 
The displacement at this intersection is then calculated using a soft Voronoi formulation:
\begin{align}
    \displacement(\bar{\mathbf{x}}) = \frac{\sum_i \exp(-\tau \|\bar{\mathbf{x}} - \mathbf{s}_i\|_2) d_i}{\sum_i \exp(-\tau \|\bar{\mathbf{x}} - \mathbf{s}_i\|_2)}
\label{eq:svdisp}
\end{align}
where $\tau$ denotes the temperature parameter controlling the smoothness of the soft Voronoi interpolation.
By displacing the dipole face along its normal direction according to this map, as illustrated in ~\cref{fig:geo_app}, we compute the final intersection point $\mathbf{x}$ between the ray and the displaced surface. 
This refined intersection point is utilized to compute the final intersection length and the surface radiance. 
The radiance at $\mathbf{x}$ is modeled using an analogous soft Voronoi interpolation:
\begin{align}
    \radiance(\mathbf{x}) = \frac{\sum_i \exp(-\tau \|\mathbf{x} - \mathbf{s}_i\|_2) \radiance_i}{\sum_i \exp(-\tau \|\mathbf{x} - \mathbf{s}_i\|_2)}
\label{eq:svrad}
\end{align}
This formulation allows a single geometric primitive to represent complex geometric and textural variations, significantly reducing the total primitive count required for high-fidelity reconstruction; see ~\cref{sec:train_details}.

\subsection{Optimization}
Following Radiant Foam~\cite{radfoam}, we recognize that the localized nature of our primitives renders the optimization landscape susceptible to local minima. 
To mitigate this, we adopt a two-pronged strategy: careful initialization followed by an adaptive schedule of \textit{densification} and \textit{pruning}.
Consistent with \cite{radfoam}, we initialize the optimization using a sparse point cloud generated via Structure-from-Motion~(SfM)~\cite{sfm}.

\paragraph{Densification and Pruning} 
We dynamically control the number of power cells to adaptively reallocate representational capacity. We manage this through two primary operations:
\begin{itemize}[label=\textbullet]
    \item \textbf{Densification:} We maintain an exponential moving average (EMA) of each primitive's photometric error. Candidates for densification are sampled from a multinomial distribution proportional to this error, targeting underfitting regions.
    \item \textbf{Pruning:} We track each primitive's accumulated contribution (defined as $\text{opacity} \times \text{transmittance}$) to the rendered pixels via a separate EMA. Primitives falling below a prescribed contribution threshold are pruned to remove redundant components.
\end{itemize}

\paragraph{Training objective}
We supervise the optimization process using a composite loss function:
\begin{equation}
    \mathcal{L} = \losst{rgb} + \lambda_{1}\losst{SSIM} + \lambda_{2}\losst{normal} + \lambda_{3}\losst{sparse} + \lambda_{4}\losst{connect}
\end{equation}
Beyond the standard L2 photometric and SSIM losses used to preserve perceptual detail, we employ three regularization strategies to prevent degenerate configurations (see the \SupplementaryMaterial for definitions of each term):
\begin{itemize}[label=\textbullet]
    \item \textbf{Normal loss ($\losst{normal}$):} Penalizes orientations where the dot product between the associated normal $\normal_i$ and the ray direction $\mathbf{d}$ is non-negative. This ensures the high-density dipole faces the camera, aligning primitives accurately with scene geometry.
    \item \textbf{Sparsity loss ($\losst{sparse}$):} Mitigates ``floating'' artifacts by applying an $L_1$ penalty to each primitive's contribution (defined as opacity $\times$ transmittance). This suppresses redundant primitives before they are pruned.
    \item \textbf{Connectivity loss ($\losst{connect}$):} Minimizes the radial overlap between adjacent spheres based on the \v{C}ech adjacencny graph. This maintains a sparse adjacency graph and ensures continuous surface geometry without excessive spatial redundancy.
\end{itemize}

\subsection{Implementation details}

\paragraph{Model Configurations}
Our main experiments use 8 detail sites per primitive (see~\cref{tab:ablation_dipoles} for ablation).
Each site is parameterized by a single Spherical Voronoi function, defined by eight spherical axes, as proposed by Di Sario et al.~\cite{spherical_voronoi}, and a scalar displacement value applied normal to the primitive's plane.

\paragraph{Training Details} 
\label{sec:train_details}
The model is optimized via a rasterization-based pipeline during the training phase. 
We utilize $500,000$ power sites for \dlthreedv and \mipnerf indoor scenes and $1.2$ million power sites for \mipnerf outdoor scenes -- $2\times$ to $4\times$ lower than existing baseline methods for the same datasets.
The training schedule begins with an initial $500$ iterations conducted on downsampled images to stabilize the global structure, followed by training at full resolution up to $30,000$ iterations. 
To adaptively refine the representation, we perform densification every $100$ iterations, beginning at iteration $1,000$ and concluding at iteration $24,000$. 
On an NVIDIA RTX 4090 GPU, our method requires $30$ minutes to train on the Bonsai scene from MipNeRF 360.

\section{Experiments}

\begin{table*}[t]
\centering
\caption{
    {\bf Qualitative comparisons --} we compare our method's novel view reconstruction accuracy and rendering speed to a number of ray tracing and rasterization baselines.
    For \mipnerf, we use the configuration provided by the authors for each method.
    For \dlthreedv, which serves as our test set and which none of the methods here provide configurations for, we use the corresponding configuration for \mipnerf indoor scenes.
    \ul{Important notes}: 3DGS, 3DGRT, and 3DGUT do not constrain the number of Gaussians in the configuration, but rather determine it \textit{dynamically} through optimization.
    Also, for some scenes the provided ray tracing implementation for Radiance Meshes failed, and for all others it was slower than our method.
    Per-scene breakdowns are provided in the~\SupplementaryMaterial.
}
\resizebox{\linewidth}{!}{
\setlength{\tabcolsep}{5pt}
\begin{tabular}{l|ccccc|ccccc}
    & \multicolumn{5}{c|}{\dlthreedv} & \multicolumn{5}{c}{\mipnerf} \\
    & PSNR$\uparrow$ & SSIM$\uparrow$ & LPIPS$\downarrow$ & \multicolumn{2}{c|}{FPS$\uparrow$} & PSNR$\uparrow$ & SSIM$\uparrow$ & LPIPS$\downarrow$ & \multicolumn{2}{c}{FPS$\uparrow$} \\
    & & & & ray & raster & & & & ray & raster \\
    \midrule 
    \multicolumn{1}{c}{} & \multicolumn{10}{l}{Rasterization only} \\
    \midrule 
    3DGS~\cite{3dgs} & 27.26 & 0.87 & 0.21 & - & \textbf{168} & 28.98 & 0.87 & 0.22 & - & 293 \\
    3DGS-MCMC~\cite{3dgs-mcmc} & 27.77 & 0.88 & 0.19 & - & 147 & 29.55 & \textbf{0.89} & \textbf{0.20} & - & \textbf{302} \\
    $\beta$-splat~\cite{betasplatting} & \textbf{28.55} & \textbf{0.89} & \textbf{0.18} & - & 139 &  \textbf{29.81} & \textbf{0.89} & \textbf{0.20} & - & 137\\
    \midrule
    \multicolumn{1}{c}{} & \multicolumn{10}{l}{Ray Tracing only} \\
    \midrule 
    3DGRT~\cite{3dgrt} & 27.28 & \textbf{0.87} & 0.25 & 82 & -  & 28.45 & \textbf{0.86} & \textbf{0.23} & 64 & - \\
    Radiant Foam~\cite{radfoam} & \textbf{27.80} & 0.86 & \textbf{0.23} & \textbf{112} & - & \textbf{28.47} & 0.83 & 0.24 & \textbf{180} & - \\
    \midrule
    \multicolumn{1}{c}{} & \multicolumn{10}{l}{Both} \\
    \midrule \\
    3DGUT~\cite{3dgut} & 27.74 & \textbf{0.87} & 0.26 & 63 & \textbf{208} & \textbf{28.98} & \textbf{0.87} & 0.23 & 55 & 177\\
    Radiance Meshes~\cite{radiancemeshes} & 26.05 & 0.84 & 0.30 & - & 174 & 28.39 & 0.86 & 0.26 & - & 159 \\
    Power Foam (ours) & \textbf{28.20} & 0.85 & \textbf{0.22} & \textbf{107} & 196 & 28.91 & 0.84 & \textbf{0.21} & \textbf{174} & \textbf{275} \\
\end{tabular}
}
\label{tab:qual_res}
\end{table*}

MipNeRF 360~\cite{mipnerf360} is the standard benchmark for this domain, but its lack of a private test set often leads to hyperparameter overfitting by baseline methods.
To ensure a rigorous evaluation, we follow standard tuning practices for MipNeRF 360 but also introduce the DL3DV sample set~\cite{dl3dv} as an untuned test set.
While DL3DV is a validated novel view synthesis dataset, none of our baselines report metrics on it, ensuring a fair evaluation of generalization.
{It is important to note that we \ul{use DL3DV as a true test set} -- we never trained a DL3DV scene while developing our method, and only after finalizing our method and hyperparameters did we evaluate on it \textit{once}}, using the same settings we used for MipNeRF 360~(specifically, the config for the indoor scenes).
For all baselines, we also train them on DL3DV using the hyperparameters they provide for MipNeRF 360.
In total, we evaluate on 18 scenes across DL3DV and MipNeRF 360.

To ensure our results are comparable with established baselines, we follow standard preprocessing protocols throughout our evaluation. 
For the DL3DV dataset, all images are consistently downsampled by a factor of two, while for the Mip-NeRF 360 dataset, indoor scenes are downsampled by a factor of two and outdoor scenes by a factor of four. 
All performance benchmarks, including frame rates and rendering speeds, were measured on a consumer-grade NVIDIA RTX 4090 GPU, which demonstrates the practical applicability of our method on modern, accessible hardware.

\paragraph{Metrics}
We assess the performance of each method using three standard image quality metrics, namely Peak Signal-to-Noise Ratio (PSNR), Structural Similarity Index (SSIM), and Learned Perceptual Image Patch Similarity (LPIPS). 
In addition to these quantitative measures, the \Webpage include \ul{rendered video paths} for selected scenes. 
These videos showcase the stability of our method when rendering from viewpoints that differ significantly from the training set distribution, providing a more holistic view of the reconstruction quality.

\paragraph{Quantitative Results}
Our quantitative evaluation on DL3DV and Mip-NeRF 360 is summarized in \Cref{tab:qual_res}. 
Crucially, our approach outperforms recent unified rendering methods—such as 3DGUT~\cite{3dgut} and Radiance Meshes~\cite{radiancemeshes}—in visual quality, establishing a new standard for representations that natively support both rasterization and ray tracing.

Furthermore, our method demonstrates exceptional efficiency across both rendering paradigms.
By constructing a full triangulation, our method can support constant-time single-ray traversal similar to Radiant Foam, which outperforms traditional BVH-based ray tracing methods.
Simultaneously, the localized characteristics of our method facilitate efficient rasterization, matching the rendering speed of 3DGS-based methods.
Finally, while achieving this dual-purpose efficiency, our method requires minimal compromise in absolute fidelity. 
Our visual quality remains highly competitive with pure rasterization models, performing comparably to the current differentiable rasterization state-of-the-art, such as 3DGS-MCMC~\cite{3dgs-mcmc} and $\beta$-splatting~\cite{betasplatting}.

\paragraph{Qualitative Results}
The baseline methods achieve high reconstruction fidelity on these standard benchmarks. 
Consequently, there are no easily perceivable differences between the high-quality outputs of our method and those of the baseline models in static images. 
We provide qualitative video comparisons between our model and the baseline models in the~\Webpage.

\begin{table}[t]
\begin{minipage}[t]{0.48\linewidth}
\centering
\caption{
    {\bf Ablation study --}
    we evaluate the impact of various components in our method by systematically excluding them and analyzing the reconstruction quality (PSNR $\uparrow$) on the Car and Statue scenes from \dlthreedv and Bonsai and Garden scenes from \mipnerf.
}
\resizebox{\linewidth}{!}{
\setlength{\tabcolsep}{2pt}
\begin{tabular}{cccccc|cccc|c}
    \rotatebox{90}{Radii} & \rotatebox{90}{Dipoles} & \rotatebox{90}{Disp.} & \rotatebox{90}{$\losst{connect}$} & \rotatebox{90}{$\losst{sparse}$} & \multicolumn{1}{c}{\rotatebox{90}{$\losst{normal}$}} & Car & Statue & Bonsai & \multicolumn{1}{c}{Garden} & Mean \\
    \midrule 
    \xmark & \cmark & \cmark & \cmark & \cmark & \cmark & 24.15 & 24.49 & 16.10 & 19.22 & 20.99 \\
    \cmark & \xmark & \cmark & \cmark & \cmark & \cmark & 29.51 & 29.24 & 30.75 & 24.32 & 28.46 \\
    \cmark & \cmark & \xmark & \cmark & \cmark & \cmark & 30.68 & 31.74 & 33.04 & 26.82 & 30.57 \\
    \cmark & \cmark & \cmark & \xmark & \cmark & \cmark & 30.71 & 31.47 & 32.52 & 26.45 & 30.29 \\
    \cmark & \cmark & \cmark & \cmark & \xmark & \cmark & 30.79 & 32.36 & 33.25 & 26.84 & 30.81 \\
    \cmark & \cmark & \cmark & \cmark & \cmark & \xmark & 30.78 & \textbf{32.70} & 33.21 & 26.90 & 30.90 \\
    \cmark & \cmark & \cmark & \cmark & \cmark & \cmark & \textbf{30.83} & 32.66 & \textbf{33.32} & \textbf{27.10} & \textbf{30.98} \\
\end{tabular}
}
\label{tab:ablation}
\end{minipage}
\hfill
  \begin{minipage}[t]{0.48\linewidth}
    \centering
    \caption{
        {\bf Ablation study --}
        we investigate the influence of detail site density by varying the number of sites per power cell and assessing the resulting reconstruction performance (PSNR $\uparrow$) across the Car and Statue scenes from \dlthreedv and Bonsai and Garden scenes from \mipnerf.
    }
    \resizebox{\linewidth}{!}{
    \setlength{\tabcolsep}{2pt}
    \begin{tabular}{c|cccc|c}
        \# of sites & Car & Statue & Bonsai & Garden & Mean \\
        \midrule 
        1 & 29.82 & 30.54 & 31.39 & 25.81 & 29.39 \\
        2 & 30.22	& 31.56	& 32.19	& 26.35	& 30.08 \\
        4 & 30.56	& 32.26	& 32.85	& 26.72	& 30.60 \\
        8 & \textbf{30.83} & \textbf{32.66} & \textbf{33.32} & \textbf{27.10} & \textbf{30.98} \\
    \end{tabular}
    }
    \label{tab:ablation_dipoles}
  \end{minipage}
\end{table}

\subsection{Ablation study}
We conducted a series of ablation experiments to isolate and quantify the impact of our architectural choices and loss functions. 
Specifically, we evaluated: per-cell learnable power radii, dipole parameterization, the number of detail sites, displacement mapping, and our regularization terms ($\losst{connect}$,  $\losst{sparse}$, and $\losst{normal}$). 
The quantitative results are summarized in ~\cref{tab:ablation} and ~\cref{tab:ablation_dipoles}.

\paragraph{Per-cell Learnable Radii}
We assessed the importance of per-cell radii by replacing them with a single, global radius for all power sites. 
This configuration resulted in a substantial decrease in reconstruction quality, demonstrating that fixed localization across the scene is insufficient. 
This drop highlights that different cells require varying spatial extents based on their location in the scene to effectively model diverse frequency information.

\paragraph{Dipole Parameterization}
To evaluate the impact of our dipole representation, we compared our approach against a baseline that models cells with constant density. 
Removing the dipole structure makes the model much less efficient in representing sharp surface boundaries, leading to a marked decline in reconstruction quality across all datasets.

\paragraph{Number of Detail Sites}
We investigated the scalability of appearance and geometry modeling by varying the number of detail sites per dipole plane (using 1, 2, 4, and 8 sites). 
As indicated in ~\cref{tab:ablation_dipoles}, increasing the number of sites consistently improves the PSNR. 
This suggests that additional degrees of freedom allow the model to represent increasingly complex appearances.

\paragraph{Displacement Field}
We ablated the role of the displacement field by eliminating displacement of the dipole plane during rendering. 
This omission led to a degradation in visual quality and PSNR, confirming that displacement fields are effective for efficiently capturing scene geometry.

\paragraph{Regularization Terms}
Finally, we ablated the three regularization losses. 
Removing the connectivity loss ($\losst{connect}$) resulted in the most notable performance degradation among the three, underscoring its role in maintaining spatial coherence. 
The sparsity ($\losst{sparse}$) and normal ($\losst{normal}$) losses also provided further incremental improvements to the overall reconstruction fidelity.

\section{Conclusion}
We have introduced Power Foam, a novel 3D representation that enables a unified rendering paradigm for both real-time ray tracing and rasterization. 
At the core of our approach is a foam-based structure composed of bounded polyhedral cells, which facilitates efficient rasterization while maintaining the inherent ray tracing efficiency of an explicit volumetric mesh. 
Our method produces \ul{mathematically identical results} under both rendering paradigms, avoiding the popping artifacts and view-inconsistency of splatting methods.
Furthermore, it matches the performance of state-of-the-art methods in their respective domains -- specifically Radiant Foam for ray tracing and 3D Gaussian Splatting for rasterization, providing a practical path toward unified real-time differentiable rendering.

%
%
\newpage

\paragraph{Acknowledgments}
We extend our deepest gratitude to George Shramko for his exceptional support and enormous help with early benchmarking.
This work was supported in part by the Natural Sciences and Engineering Research Council of Canada (NSERC) Discovery Grant, NSERC Collaborative Research and Development Grant, Google DeepMind, Digital Research Alliance of Canada, the Advanced Research Computing at the University of British Columbia, Microsoft Azure, and the SFU Visual Computing Research Chair program.

\bibliographystyle{splncs04}
\bibliography{main}

\begin{thebibliography}{10}
\providecommand{\url}[1]{\texttt{#1}}
\providecommand{\urlprefix}{URL }
\providecommand{\doi}[1]{https://doi.org/#1}

\bibitem{mipnerf}
Barron, J.T., Mildenhall, B., Tancik, M., Hedman, P., Martin-Brualla, R., Srinivasan, P.P.: Mip-nerf: A multiscale representation for anti-aliasing neural radiance fields. Int. Conf. Comput. Vis.  (2021)

\bibitem{mipnerf360}
Barron, J.T., Mildenhall, B., Verbin, D., Srinivasan, P.P., Hedman, P.: Mip-nerf 360: Unbounded anti-aliased neural radiance fields. IEEE Conf. Comput. Vis. Pattern Recog.  (2022)

\bibitem{texturedgaussians}
Chao, B., Tseng, H.Y., Porzi, L., Gao, C., Li, T., Li, Q., Saraf, A., Huang, J.B., Kopf, J., Wetzstein, G., Kim, C.: Textured gaussians for enhanced 3d scene appearance modeling. In: IEEE Conf. Comput. Vis. Pattern Recog. (2025)

\bibitem{condor2025don}
Condor, J., Speierer, S., Bode, L., Bozic, A., Green, S., Didyk, P., Jarabo, A.: Don't splat your gaussians: Volumetric ray-traced primitives for modeling and rendering scattering and emissive media. ACM Trans. Graph.  \textbf{44}(1),  1--17 (2025)

\bibitem{courant1996mathematics}
Courant, R., Robbins, H.: What is Mathematics?: an elementary approach to ideas and methods. OUP Us (1996)

\bibitem{spherical_voronoi}
Di~Sario, F., Rebain, D., Verbin, D., Grangetto, M., Tagliasacchi, A.: Spherical voronoi: Directional appearance as a differentiable partition of the sphere. arXiv preprint arXiv:2512.14180  (2025)

\bibitem{edelsbrunner2003shape}
Edelsbrunner, H., Kirkpatrick, D., Seidel, R.: On the shape of a set of points in the plane. IEEE Transactions on information theory  \textbf{29}(4),  551--559 (2003)

\bibitem{plenoxels}
Fridovich-Keil, S., Yu, A., Tancik, M., Chen, Q., Recht, B., Kanazawa, A.: Plenoxels: Radiance fields without neural networks. In: IEEE Conf. Comput. Vis. Pattern Recog. (2022)

\bibitem{radfoam}
Govindarajan, S., Rebain, D., Yi, K.M., Tagliasacchi, A.: Radiant foam: Real-time differentiable ray tracing. In: Int. Conf. Comput. Vis. pp. 4135--4145 (October 2025)

\bibitem{hahlbohm2025efficient}
Hahlbohm, F., Friederichs, F., Weyrich, T., Franke, L., Kappel, M., Castillo, S., Stamminger, M., Eisemann, M., Magnor, M.: Efficient perspective-correct 3d gaussian splatting using hybrid transparency. Comput. Graph. Forum  \textbf{44}(2),  e70014 (2025)

\bibitem{3dgs}
Kerbl, B., Kopanas, G., Leimk{\"u}hler, T., Drettakis, G.: 3d gaussian splatting for real-time radiance field rendering. ACM Trans. Graph.  \textbf{42}(4) (July 2023)

\bibitem{3dgs-mcmc}
Kheradmand, S., Rebain, D., Sharma, G., Sun, W., Tseng, Y.C., Isack, H., Kar, A., Tagliasacchi, A., Yi, K.M.: 3d gaussian splatting as markov chain monte carlo. In: Adv. Neural Inform. Process. Syst. (2024), spotlight Presentation

\bibitem{stochasticsplats}
Kheradmand, S., Vicini, D., Kopanas, G., Lagun, D., Yi, K.M., Matthews, M., Tagliasacchi, A.: Stochasticsplats: Stochastic rasterization for sorting-free 3d gaussian splatting. In: Int. Conf. Comput. Vis. pp. 26326--26335 (2025)

\bibitem{dl3dv}
Ling, L., Sheng, Y., Tu, Z., Zhao, W., Xin, C., Wan, K., Yu, L., Guo, Q., Yu, Z., Lu, Y., et~al.: Dl3dv-10k: A large-scale scene dataset for deep learning-based 3d vision. In: IEEE Conf. Comput. Vis. Pattern Recog. pp. 22160--22169 (2024)

\bibitem{betasplatting}
Liu, R., Sun, D., Chen, M., Wang, Y., Feng, A.: Deformable beta splatting. In: Proc. SIGGRAPH (2025)

\bibitem{neuralvolumes}
Lombardi, S., Simon, T., Saragih, J., Schwartz, G., Lehrmann, A., Sheikh, Y.: Neural volumes: Learning dynamic renderable volumes from images. ACM Trans. Graph.  \textbf{38}(4),  65:1--65:14 (Jul 2019)

\bibitem{ever}
Mai, A., Hedman, P., Kopanas, G., Verbin, D., Futschik, D., Xu, Q., Kuester, F., Barron, J., Zhang, Y.: Ever: Exact volumetric ellipsoid rendering for real-time view synthesis (2024), \url{https://arxiv.org/abs/2410.01804}

\bibitem{radiancemeshes}
Mai, A., Hedstrom, T., Kopanas, G., Kontkanen, J., Kuester, F., Barron, J.T.: Radiance meshes for volumetric reconstruction (2025), \url{https://arxiv.org/abs/2512.04076}

\bibitem{meijering1953interface}
Meijering, J.L.: Interface area, edge length, and number of vertices in crystal aggregates with random nucleation. Philips Research Reports  \textbf{8},  270--290 (1953)

\bibitem{nerf}
Mildenhall, B., Srinivasan, P.P., Tancik, M., Barron, J.T., Ramamoorthi, R., Ng, R.: Nerf: Representing scenes as neural radiance fields for view synthesis. In: Eur. Conf. Comput. Vis. (2020)

\bibitem{3dgrt}
Moenne-Loccoz, N., Mirzaei, A., Perel, O., de~Lutio, R., Esturo, J.M., State, G., Fidler, S., Sharp, N., Gojcic, Z.: 3d gaussian ray tracing: Fast tracing of particle scenes. ACM Trans. Graph.  (2024)

\bibitem{instantngp}
M\"uller, T., Evans, A., Schied, C., Keller, A.: Instant neural graphics primitives with a multiresolution hash encoding. ACM Trans. Graph.  \textbf{41}(4),  102:1--102:15 (Jul 2022)

\bibitem{stopthepop}
Radl, L., Steiner, M., Parger, M., Weinrauch, A., Kerbl, B., Steinberger, M.: Stopthepop: Sorted gaussian splatting for view-consistent real-time rendering. ACM Trans. Graph.  \textbf{43}(4),  1--17 (2024)

\bibitem{derf}
Rebain, D., Jiang, W., Yazdani, S., Li, K., Yi, K.M., Tagliasacchi, A.: Derf: Decomposed radiance fields. In: IEEE Conf. Comput. Vis. Pattern Recog. (2021)

\bibitem{sfm}
Sch\"{o}nberger, J.L., Frahm, J.M.: Structure-from-motion revisited. IEEE Conf. Comput. Vis. Pattern Recog.  (2016)

\bibitem{srn}
Sitzmann, V., Zollh{\"o}fer, M., Wetzstein, G.: Scene representation networks: Continuous 3d-structure-aware neural scene representations. In: Adv. Neural Inform. Process. Syst. (2019)

\bibitem{aaa}
Steiner, M., K{\"o}hler, T., Radl, L., Windisch, F., Schmalstieg, D., Steinberger, M.: Aaa-gaussians: Anti-aliased and artifact-free 3d gaussian rendering. In: Int. Conf. Comput. Vis. pp. 27650--27659 (2025)

\bibitem{digest}
Tagliasacchi, A., Mildenhall, B.: Volume rendering digest (for nerf). arXiv preprint arXiv:2209.02417  (2022)

\bibitem{3dgut}
Wu, Q., Martinez~Esturo, J., Mirzaei, A., Moenne-Loccoz, N., Gojcic, Z.: 3dgut: Enabling distorted cameras and secondary rays in gaussian splatting. IEEE Conf. Comput. Vis. Pattern Recog.  (2025)

\end{thebibliography}

\clearpage
\setcounter{page}{1}
\appendix

\section{Qualitative results}
We provide qualitative video comparisons in our \Webpage.

\section{Proof that the Power Diagram is Pop-free}
\label{sec:proof}

We first briefly restate the proof of Rebain et al.~\cite{derf}, which shows that Voronoi cells rendered in the depth order of their sites do not suffer from popping artifacts.
The original statement of the proof described this as compatibility with the Painter's Algorithm, which simply means that the ordering of ray-cell intersections for any ray is the same as the distance ordering of sites.

Let $\mathcal{P} = \{P_1, \dots, P_N\} \subset \mathbb{R}^n$ be a set of Voronoi sites. 
Recall that the \emph{Voronoi cell} of site $P$ is defined as:
\begin{equation}
  V_P \;=\; \bigl\{\, x \in \mathbb{R}^n \;\big|\;
    \|x - P\| \le \|x - P'\| \;\;\forall\, P' \in \mathcal{P} \bigr\}.
  \label{eq:voronoi_cell}
\end{equation}
Given a camera at position $Q \in \mathbb{R}^n$, define a partial order on the cells by:
\begin{equation}
  V_{P'} <_Q V_P
  \quad\Longleftrightarrow\quad
  d(P', Q) < d(P, Q),
  \label{eq:voronoi_order}
\end{equation}
where $d$ denotes Euclidean distance.

\begin{proposition}[\cite{derf}]
  \label{prop:voronoi_painter}
  The ordering~\eqref{eq:voronoi_order} is a valid painter's ordering: if any part of $V_{P'}$ occludes any part of $V_P$ as seen from~$Q$, then $V_{P'} <_Q V_P$.
\end{proposition}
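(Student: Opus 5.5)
The plan is to reduce occlusion to a single ray and then use the defining inequality of the two cells at two points on that ray. Suppose some part of $V_{P'}$ occludes some part of $V_P$ as seen from $Q$. Then there is a ray from $Q$ that passes through a point $x' \in V_{P'}$ and, farther along, through a point $x \in V_P$. Concretely, $x' = Q + s\,u$ and $x = Q + t\,u$ with unit direction $u$ and $0 \le s < t$.

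We must show $\|Q - P'\| < \|Q - P\|$. The main tool is to expand squared distances so that the inequalities become linear in the point. For any point $y$,
\begin{equation*}
\|y - P\|^2 - \|y - P'\|^2 \;=\; \|P\|^2 - \|P'\|^2 - 2\langle y,\, P - P'\rangle ,
\end{equation*}
which is an affine function of $y$. Call it $f(y)$. The cell definitions~\eqref{eq:voronoi_cell} give $f(x') \ge 0$ (because $x' \in V_{P'}$) and $f(x) \le 0$ (because $x \in V_P$). Our goal is $f(Q) > 0$, since $f(Q) = \|Q - P\|^2 - \|Q - P'\|^2$.

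Restrict $f$ to the ray: $g(\lambda) = f(Q + \lambda u)$ is affine in $\lambda$, with $g(s) \ge 0$ and $g(t) \le 0$ for $s < t$. So $g$ is non-increasing, and extrapolating backwards to $\lambda = 0 \le s$ gives $g(0) \ge g(s) \ge 0$. This yields $d(P', Q) \le d(P, Q)$.

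The main obstacle is upgrading this to the strict inequality that $<_Q$ requires. Equality forces either $g$ to be constant (and hence $\equiv 0$), or $s = 0$ with $g(0) = 0$. In both cases the shared point, or the whole segment, lies on the bisector plane of $P$ and $P'$. That is a measure-zero, degenerate situation: either the ray grazes the common face, or the camera sits on the bisector. I would handle it in one of two ways. The first option is to interpret ``occludes'' as requiring a nondegenerate overlap, so that $x'$ and $x$ can be taken in cell interiors; then both inequalities are strict, and with $s \ge 0$ we get $g(0) > 0$. The second option is to allow ties to be broken arbitrarily, noting that tied cells meet only along the bisector, where they contribute zero-length segments to the ray. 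I would adopt the interior-point convention, which is the one implicitly used in \cite{derf}.
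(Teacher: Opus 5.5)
Your argument is correct and is essentially the paper's proof. The paper introduces the open halfspace $H=\{z : d(z,P)<d(z,P')\}$ and argues that the segment from $x$ to $Q$ cannot cross its boundary hyperplane twice; this is exactly your observation that the affine function $f$ is monotone along the ray, and it gives $d(P',Q)\le d(P,Q)$. The paper settles strictness only by remarking that it holds generically, so your explicit identification of the equality cases (camera on the bisector, or ray lying in it), together with the interior-point convention, is a somewhat more careful version of the same step.
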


\begin{proof}
  Suppose $x \in V_P$, $x' \in V_{P'}$, and $x'$ lies strictly between $x$ and $Q$ on the line segment joining them, i.e. $x' = \lambda\, x + (1-\lambda)\, Q$ for some $\lambda \in (0,1)$.
  We show that $d(P', Q) < d(P, Q)$.
  Define the \emph{halfspace}:
  \begin{equation}
    H \;=\; \bigl\{\, z \in \mathbb{R}^n \;\big|\;
      d(z, P) < d(z, P') \bigr\}.
    \label{eq:halfspace_voronoi}
  \end{equation}
  Since $x \in V_P$ we have $d(x,P) \le d(x,P')$, so $x \in \overline{H}$ (the closure of $H$).
  Since $x' \in V_{P'}$ we have $d(x',P') \le d(x',P)$, so $x' \notin H$.
  Because $H$ is a halfspace, its boundary $\partial H$ is a hyperplane, and any line can cross it at most once.
  If $Q$ were in $H$, the segment from $x$ (inside $\overline{H}$) to $Q$ (inside $H$) would have to re-enter $H$ after leaving it at $x'$, requiring two crossings of~$\partial H$---a contradiction.
  Therefore $Q \notin H$, which gives $d(Q, P) \ge d(Q, P')$.
  Generically the inequality is strict, yielding $V_{P'} <_Q V_P$.
\end{proof}

We can also state the proof in a more intuitive way: Voronoi faces are always orthogonal to the line segments connecting the corresponding sites.
Consequently, the orientation of any face in the mesh with respect to a camera origin is strictly determined by which site is closer to the camera -- thus a ray leaving one cell and entering another will always correspond to the distance of the site from the camera increasing.


\paragraph{Extension to power diagrams}
Recall that the power diagram generalizes the Voronoi diagram by assigning a real-valued weight $\omega_P=r_P^2 \in \mathbb{R}$ to each site $P \in \mathcal{P}$.
The \emph{power distance} from a point $z$ to site $P$ is defined as:
\begin{equation}
  \operatorname{pow}(z, P) \;=\; \|z - P\|^2 - \omega_P,
  \label{eq:pow_dist}
\end{equation}
and the \emph{power cell} of $P$ is:
\begin{equation}
  \Pi_P \;=\; \bigl\{\, x \in \mathbb{R}^n \;\big|\;
    \operatorname{pow}(x, P) \le \operatorname{pow}(x, P')
    \;\;\forall\, P' \in \mathcal{P} \bigr\}.
  \label{eq:power_cell}
\end{equation}
When all weights are equal, $\Pi_P = V_P$ and we recover the ordinary Voronoi diagram.

A key observation is that the bisecting surface between two power cells is the locus $\operatorname{pow}(z,P) = \operatorname{pow}(z,P')$, which expands to:
\begin{equation}
  2\,z \cdot (P' - P)
  \;=\;
  \|P'\|^2 - \|P\|^2 + \omega_{P'} - \omega_P.
  \label{eq:power_bisector}
\end{equation}
This is a hyperplane with normal $(P' - P)$, the \emph{same} normal as the ordinary Voronoi bisector; only the scalar offset changes.
In particular, every power cell is a convex polytope.

\medskip

We now prove that the painter's algorithm is compatible with power diagrams under a natural ordering based on power distance from the camera.

\begin{theorem}
  \label{thm:power_painter}
  Let $\mathcal{P} \subset \mathbb{R}^n$ be a set of sites with weights $\{\omega_P\}_{P \in \mathcal{P}}$, and let $Q \in \mathbb{R}^n$ be a camera position.
  Define the partial order:
  \begin{equation}
    \Pi_{P'} <_Q \Pi_P
    \quad\Longleftrightarrow\quad
    \operatorname{pow}(Q, P') \;<\; \operatorname{pow}(Q, P).
    \label{eq:power_order}
  \end{equation}
  Then this is a valid painter's ordering: if any part of\/ $\Pi_{P'}$ occludes any part of\/ $\Pi_P$ as seen from~$Q$, then $\Pi_{P'} <_Q \Pi_P$.
\end{theorem}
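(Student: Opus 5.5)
The plan is to mirror the proof of \cref{prop:voronoi_painter} verbatim, with Euclidean distance replaced by power distance. The only property of the Voronoi bisector used there was that it is a hyperplane. \eqref{eq:power_bisector} shows this still holds for the power bisector.

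\textbf{Setup.} Suppose $x \in \Pi_P$ and $x' \in \Pi_{P'}$, with $x' = \lambda x + (1-\lambda) Q$ for some $\lambda \in (0,1)$, so that $x'$ occludes $x$ as seen from $Q$. Define the open halfspace
\begin{equation*}
  H = \bigl\{ z \in \mathbb{R}^n \;\big|\; \operatorname{pow}(z,P) < \operatorname{pow}(z,P') \bigr\}.
\end{equation*}
Since $\operatorname{pow}(z,P) - \operatorname{pow}(z,P')$ is an affine function of $z$, its quadratic terms $\|z\|^2$ cancelling, $H$ is genuinely an open halfspace bounded by the hyperplane \eqref{eq:power_bisector}. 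We assume $P \neq P'$ so the normal $P'-P$ is nonzero; equal sites with different weights give an empty cell and are trivial.

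\textbf{Main step.} From $x \in \Pi_P$ we get $\operatorname{pow}(x,P) \le \operatorname{pow}(x,P')$, so $x \in \overline H$. From $x' \in \Pi_{P'}$ we get $x' \notin H$. Rather than the two-crossings phrasing, we argue cleanly with the affine function
\begin{equation*}
  f(z) = \operatorname{pow}(z,P) - \operatorname{pow}(z,P').
\end{equation*}
Restricted to the segment, $f$ is affine in the parameter, and we have $f(x) \le 0$ and $f(x') \ge 0$. Since $x'$ is a strict convex combination of $x$ and $Q$, affinity gives
\begin{equation*}
  f(x') = \lambda f(x) + (1-\lambda) f(Q).
\end{equation*}
Hence
\begin{equation*}
  f(Q) = \frac{f(x') - \lambda f(x)}{1-\lambda} \ge 0,
\end{equation*}
that is, $\operatorname{pow}(Q,P') \le \operatorname{pow}(Q,P)$.

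\textbf{Strictness.} This is the one delicate point. Equality forces $f(x') = f(x) = 0$, so $x$, $x'$ and $Q$ all lie on the shared power face. In that case the occlusion is degenerate: the ray grazes the face, and the occluding portion has zero measure along the ray, so it contributes nothing to rendering. Alternatively, equality holds only for a measure-zero set of camera positions, namely $Q$ on the bisector hyperplane. As in the paper's Voronoi proof, I would state that generically the inequality is strict, giving $\Pi_{P'} <_Q \Pi_P$.

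\textbf{Bounded cells.} Finally, I would remark that the bounded power cells, $\Pi_P$ intersected with the ball of radius $r_P$ about $P$, are subsets of $\Pi_P$. The ordering therefore remains valid for them, and for the Čech-complex rendering. I would also note that sorting by $\operatorname{pow}(Q,P)$ is exactly the depth key used in rasterization.
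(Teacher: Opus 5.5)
Your proof is correct and is essentially the paper's: you use the same open power halfspace $H$ and the same observations $x\in\overline{H}$ and $x'\notin H$, and you reach the same conclusion $\operatorname{pow}(Q,P')\le\operatorname{pow}(Q,P)$, strict only generically. The one difference is that you replace the paper's ``a line crosses $\partial H$ at most once'' contradiction with the identity $f(x')=\lambda f(x)+(1-\lambda)f(Q)$. That is a cleaner form of the same step, and it pins down the equality case exactly: $Q$ lies on the bisector and the ray lies inside it. One correction there: in that case the ray can meet the shared face in a segment of positive length, so instead of ``zero measure'' the correct observation is that both cells meet the ray in the same set.
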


\begin{proof}
  Suppose $x \in \Pi_P$, $x' \in \Pi_{P'}$, and $x'$ lies on the open line segment between $x$ and $Q$, i.e.\ $x' = \lambda\, x + (1-\lambda)\, Q$ for some $\lambda \in (0,1)$.
  We show that $\operatorname{pow}(Q, P') < \operatorname{pow}(Q, P)$.
  Define the \emph{power halfspace}:
  \begin{equation}
    H \;=\; \bigl\{\, z \in \mathbb{R}^n \;\big|\;
      \operatorname{pow}(z, P) < \operatorname{pow}(z, P') \bigr\}.
    \label{eq:power_halfspace}
  \end{equation}
  Expanding,
  \begin{equation}
    H \;=\; \bigl\{\, z \;\big|\;
      2\,z \cdot (P' - P) \;<\;
      \|P'\|^2 - \|P\|^2 + \omega_{P'} - \omega_P
    \bigr\},
    \label{eq:power_halfspace_expanded}
  \end{equation}
  which is an open halfspace bounded by the hyperplane~\eqref{eq:power_bisector} with outward normal $(P' - P)$.
  The crucial property is that $H$ is still a halfspace -- the weights affect only the offset, not the orientation of the boundary.

  We now follow the same argument as in Proposition~\ref{prop:voronoi_painter}:
  \begin{enumerate}
    \item Since $x \in \Pi_P$, we have
      $\operatorname{pow}(x, P) \le \operatorname{pow}(x, P')$, so
      $x \in \overline{H}$.
    \item Since $x' \in \Pi_{P'}$, we have
      $\operatorname{pow}(x', P') \le \operatorname{pow}(x', P)$, so
      $x' \notin H$.
    \item The boundary $\partial H$ is a hyperplane, so any line crosses it
      at most once.
    \item Suppose for contradiction that $Q \in H$. Then the line segment from
      $x$ to $Q$ starts in $\overline{H}$ (at~$x$), exits $\overline{H}$ before
      reaching $x'$ (since $x' \notin \overline{H}$ or $x' \in \partial H$),
      and must re-enter $H$ to reach $Q$. This requires crossing $\partial H$
      at least twice---a contradiction, since $\partial H$ is a hyperplane.
    \item Therefore $Q \notin H$, which means
      $\operatorname{pow}(Q, P) \ge \operatorname{pow}(Q, P')$. For sites in
      general position the inequality is strict, giving
      $\operatorname{pow}(Q, P') < \operatorname{pow}(Q, P)$, i.e.\
      $\Pi_{P'} <_Q \Pi_P$.
  \end{enumerate}
\end{proof}

\section{Steiner points for ray tracing}

\begin{algorithm}[H]
\DontPrintSemicolon
\KwIn{Initial power cells $\powers = \{(\primal_i, \radius_i)\}_{i=1}^n$}

\caption{Steiner Point Insertion for Ray Tracing}
\label{alg:steiner_points}

\For{$iteration \leftarrow 1$ \KwTo $6$}{
    $\mathcal{S} \leftarrow \{\hat{\primal}_j \sim \mathcal{N}_S \subset \mathbb{R}^3 \}$\;
    
    \ForEach{$\hat{\primal}_j \in \mathcal{S}$}{
        \If{$\forall (\primal_i,\radius_i)\in\mathcal{P}:\;
            \|\hat{\primal}_j - \primal_i\|_2^2 - \radius_i^2 > 0$}{
            
            $(\primal_{near},\radius_{near})
            \leftarrow
            \argmin_{(\primal_i,\radius_i)\in\mathcal{P}}
            \left(
                \|\hat{\primal}_j - \primal_i\|_2^2 - \radius_i^2
            \right)$\;
            
            $d \leftarrow \|\hat{\primal}_j - \primal_{near}\|_2$\;
            $\hat{\radius}_j \leftarrow d - \radius_{near}$\;
            
            \If{$2 \radius_{near} \le \hat{\radius}_j \le 6 \radius_{near}$}{
                $\powers \leftarrow \powers \cup \{(\hat{\primal}_j,\hat{\radius}_j)\}$\;
            }
        }
    }
}
\end{algorithm}

To ray trace a bounded power diagram, we must construct the regular triangulation which is dual to the corresponding \textit{unbounded} power diagram, as we may need to traverse faces between power cells outside the sphere bounds.
However, because these unbounded parts of the cell do not affect rendering, they are effectively un-regularized during training, often resulting in thin and elongated cells. 
These suboptimal configurations force the ray to intersect an excessive number of power cells in empty regions, which significantly degrades ray tracing efficiency. 
To address this, we incorporate Steiner points -- a well-established concept in computer graphics -- to regularize the adjacency graph and enhance ray tracing performance.

We achieve this by progressively expanding the learned bounded power diagram, filling empty regions recursively with new cells. 
We sample random points within the 3D scene, specifically from a normal distribution $\mathcal{N}_S$ with the same mean and standard deviation as the scene points, discarding any that fall within existing power cells. 
For each valid candidate, we determine its nearest neighbor based on power distance and set the candidate's radius to the distance to that neighbor's sphere. 
We selectively retain new cells whose radius is between 2 to 6 times larger than that of the nearest neighbor.
This procedure is repeated over six recursive iterations to ensure the scene is filled with cells that facilitate a more uniform triangulation and a robust traversal structure, see ~\cref{alg:steiner_points}. 
Empirically, the introduction of these Steiner points reduces the average number of ray-cell intersections from 53.36 to 36.62 for the "Bonsai" scene, resulting in a performance gain from 113 to 185 FPS.

\section{Non-pinhole Rasterization}
There are two factors which contribute to the requirement of pinhole cameras for other methods like Gaussian splatting: first, linear approximations of the projection function which transforms 3D primitives into screen space break down for highly distorted camera models.
This can be addressed by either improving the approximation of projection, as in 3DGUT~\cite{3dgut}, or by adopting a primitive model which can be efficiently evaluated on a per-ray basis, such as in Radiance Meshes~\cite{radiancemeshes}.
Our method takes the second approach.

The second factor is reliance on rendering primitives in sorted order for correct occlusion in volume rendering.
Methods like 3DGS~\cite{3dgs} which are based on unstructured ``soups'' of primitives suffer from popping artifacts, where the correct traversal order of a ray through the primitives increasingly deviates from the depth order of those primitives for rays with a different direction than the central axis of the camera.
This has been addressed by approaches like per-ray sorting~\cite{stopthepop}, which while effective, adds cost and is incompatible with most hardware rasterization pipelines.
Alternatively, methods like ours, Radiant Foam~\cite{radfoam}, and Radiance Meshes~\cite{radiancemeshes} avoid this problem by employing mesh structures which admit an ordering of primitives which is correct for any ray that passes through the camera center, regardless of direction.

Qualitative examples demonstrating our support for lossless non-pinhole rasterization are available in the supplementary html page.

\section{Loss functions}
In addition to standard $L_2$ photometric and SSIM losses, we incorporate three auxiliary regularization terms during training. We describe these components in detail below:

\paragraph{Normal Loss}
This term ensures that surface normals are consistently oriented outward, preventing degenerate “back-facing” surface properties. We implement this by penalizing the positive dot product between the face normal $\normal_i$ and the ray direction $\raydirection_r$. For a given cell $\power_i$, the loss is formalized as:
\begin{align}
    \losst{normal}(\power_i) &= \sum_{\ray \in \trainrays} \transmittance_\ray \ \opacity_\ray \ \max(\normal_i \cdot \raydirection_\ray, 0)^2
\end{align}
where $\opacity_\ray$ denotes the opacity of the cell along the ray $\ray$, $\transmittance_\ray$ represents the transmittance of the primitive along the same ray and $\trainrays$ represents the set of all rays in the training set. 
This loss is initialized with a weight of $0.1$ and follows an exponential decay schedule to reach $0.01$ by the end of training across all datasets.

\paragraph{Sparsity Loss}
Inspired by the sparsity regularizer used in 3DGS-MCMC~\cite{3dgs-mcmc}, this term applies an $L_1$ penalty to the accumulated contribution of each primitive for a given training camera. 
This regularizer effectively suppresses "floaters" -- low-density artifacts in the scene -- which are subsequently removed during the pruning phase. 
The sparsity loss for cell $\power_i$ is defined as:
\begin{align}
    \losst{sparse}(\power_i) = \sum_{\ray \in \trainrays} \transmittance_\ray \ \opacity_\ray
\end{align}
where $\opacity_\ray$ denotes the opacity of the cell along the ray $\ray$, $\transmittance_\ray$ represents the transmittance of the primitive along the same ray and $\trainrays$ represents the set of all rays in the training set. 
We apply an initial weight of $0.1$, which is exponentially decayed to $0.0001$ by the end of training.

\paragraph{Connectivity Loss} 
This loss minimizes spatial overlap between adjacent cells to eliminate redundant connectivity, thereby facilitating efficient rasterization and a sparse adjacency graph. 
Specifically, we minimize the squared sum of the overlapping distances between a sphere and its neighbors in the \v{C}ech graph. 
For cell $\power_i$, this is expressed as:
\begin{align}
    \losst{connect}(\power_i) = \sum_{j \in \text{\v{C}ech}(i)} \max(\radius_i + \radius_j - d_{ij}, 0)^2
\end{align}
where $\radius_i$ and $\radius_j$ are the radii of cells $\power_i$ and $\power_j$, and $d_{ij}$ is the Euclidean distance between their primal vertices. 
We apply an initial weight of $1e-4$, which is exponentially decayed to $1e-7$ by the end of training.

\section{Per-scene quantitative comparisons}
~\cref{tab:dl3dv_psnr,tab:dl3dv_ssim,tab:dl3dv_lpips,tab:dl3dv_fps,tab:m360_psnr,tab:m360_ssim,tab:m360_lpips,tab:m360_fps} summarize the error metrics collected for our evaluation of all considered techniques. 
These include results for both \dlthreedv and \mipnerf scenes.

\begin{table}[h]
\centering
\caption{PSNR for \dlthreedv scenes}
\label{tab:benchmark_results}
\resizebox{\textwidth}{!}{
\setlength{\tabcolsep}{5pt}
\begin{tabular}{l|ccccccc|cccc}
    \toprule
    & \multicolumn{7}{c|}{\textbf{Indoor}} & \multicolumn{4}{c}{\textbf{Outdoor}} \\
    & roomset & herbary & vasary & supermarket & car & greenhouse & grills & garden & statue & 140 & highrise \\ 
    \midrule
    3DGS & 30.01 & 32.08 & 25.08 & 32.05 & 28.91 & 22.11 & 26.49 & 22.82 & 29.56 & 25.20 & 25.52 \\
    3DGS-MCMC & 30.02 & 32.19 & 26.06 & 31.82 & 28.91 & 22.27 & 27.24 & 23.34 & 31.01 & 25.53 & 27.11 \\
    $\beta$ splats & 30.73 & 32.81 & \textbf{26.43} & 32.54 & 29.95 & \textbf{22.94} & 28.25 & 23.56 & \textbf{32.79} & 26.00 & \textbf{28.07} \\
    3DGRT & 29.46 & 31.79 & 25.67 & 31.56 & 28.70 & 22.40 & 25.58 & 23.49 & 31.02 & 25.05 & 25.40 \\
    RadFoam & \textbf{31.21} & 31.86 & 24.74 & 32.05 & 29.75 & 22.00 & 26.65 & \textbf{24.54} & 31.49 & \textbf{26.70} & 24.77 \\
    Radiance Meshes & 25.17 & 30.86 & 23.72 & 28.70 & 28.47 & 22.13 & 23.34 & 23.64 & 30.62 & 24.88 & 25.05 \\
    3DGUT & 30.06 & 31.95 & 25.79 & 31.81 & 29.25 & 22.28 & 26.14 & 23.77 & 31.97 & 25.78 & 26.33 \\
    PowerFoam & 30.23 & \textbf{34.15} & 25.57 & \textbf{33.51} & \textbf{30.83} & 22.58 & \textbf{28.29} & 21.66 & 32.66 & 24.45 & 26.32 \\ 
    \bottomrule
\end{tabular}
}
\label{tab:dl3dv_psnr}
\end{table}

\begin{table}[h]
\centering
\caption{SSIM for \dlthreedv scenes}
\resizebox{\textwidth}{!}{
\setlength{\tabcolsep}{5pt}
\begin{tabular}{l|ccccccc|cccc}
    \toprule
    & \multicolumn{7}{c|}{\textbf{Indoor}} & \multicolumn{4}{c}{\textbf{Outdoor}} \\
    & roomset & herbary & vasary & supermarket & car & greenhouse & grills & garden & statue & 140 & highrise \\ 
    \midrule
    3DGS & 0.93 & 0.95 & 0.82 & 0.92 & 0.93 & 0.75 & 0.91 & 0.69 & 0.93 & 0.84 & 0.88 \\
    3DGS-MCMC & \textbf{0.94} & 0.95 & 0.84 & 0.92 & 0.93 & 0.76 & 0.92 & 0.71 & 0.94 & 0.85 & 0.91 \\
    $\beta$ splats & \textbf{0.94} & \textbf{0.96} & \textbf{0.85} & \textbf{0.93} & \textbf{0.94} & \textbf{0.79} & \textbf{0.93} & \textbf{0.72} & \textbf{0.95} &\textbf{ 0.86} & \textbf{0.91}\\
    3DGRT & 0.93 & 0.95 & 0.83 & 0.93 & 0.93 & 0.76 & 0.91 & 0.70 & 0.93 & 0.83 & 0.87 \\
    RadFoam & 0.93 & 0.94 & 0.78 & 0.92 & 0.92 & 0.72 & 0.88 & \textbf{0.72} & 0.93 & 0.84 & 0.84 \\
    Radiance Meshes & 0.84 & 0.94 & 0.77 & 0.90 & 0.93 & 0.75 & 0.82 & \textbf{0.72} & 0.94 & 0.82 & 0.85 \\
    3DGUT & 0.93 & 0.95 & 0.83 & 0.92 & 0.93 & 0.75 & 0.91 & 0.71 & 0.93 & 0.84 & 0.88 \\
    PowerFoam & 0.92 & \textbf{0.96} & 0.80 & \textbf{0.93} & 0.93 & 0.72 & 0.90 & 0.67 & 0.94 & 0.76 & 0.85 \\ 
    \bottomrule
\end{tabular}
}
\label{tab:dl3dv_ssim}
\end{table}

\begin{table}[h]
\centering
\caption{LPIPS for \dlthreedv scenes}
\resizebox{\textwidth}{!}{
\setlength{\tabcolsep}{5pt}
\begin{tabular}{l|ccccccc|cccc}
    \toprule
    & \multicolumn{7}{c|}{\textbf{Indoor}} & \multicolumn{4}{c}{\textbf{Outdoor}} \\
    & roomset & herbary & vasary & supermarket & car & greenhouse & grills & garden & statue & 140 & highrise \\ 
    \midrule
    3DGS & \textbf{0.10} & \textbf{0.11} & 0.23 & 0.19 & 0.20 & 0.24 & 0.24 & 0.28 & 0.27 & 0.25 & 0.17 \\
    3DGS-MCMC & \textbf{0.10} & \textbf{0.11} & 0.21 & 0.20 & 0.21 & 0.23 & 0.23 & 0.26 & 0.16 & 0.24 & 0.13 \\
    $\beta$ splats & \textbf{0.10} & \textbf{0.11} & \textbf{0.20} & 0.19 & 0.20 & \textbf{0.21} & \textbf{0.21} & \textbf{0.24} & \textbf{0.14} & \textbf{0.23} & \textbf{0.12} \\
    3DGRT & 0.16 & 0.19 & 0.25 & 0.24 & 0.29 & 0.27 & 0.30 & 0.30 & 0.25 & 0.31 & 0.22 \\
    RadFoam & 0.17 & \textbf{0.11} & 0.27 & 0.20 & 0.21 & 0.28 & 0.26 & 0.28 & 0.17 & 0.29 & 0.26 \\
    Radiance Meshes & 0.37 & 0.22 & 0.28 & 0.27 & 0.30 & 0.28 & 0.41 & 0.31 & 0.24 & 0.33 & 0.29 \\
    3DGUT & 0.17 & 0.19 & 0.28 & 0.25 & 0.29 & 0.29 & 0.31 & 0.29 & 0.25 & 0.31 & 0.22 \\
    PowerFoam & 0.17 & \textbf{0.11} & 0.27 & \textbf{0.16} & \textbf{0.19} & 0.28 & 0.24 & 0.31 & \textbf{0.14} & 0.34 & 0.21 \\ 
    \bottomrule
\end{tabular}
}
\label{tab:dl3dv_lpips}
\end{table}

\begin{table}[h]
\centering
\caption{Ray tracing / Rasterization FPS for \dlthreedv scenes}
\resizebox{\textwidth}{!}{
\setlength{\tabcolsep}{4pt}
\begin{tabular}{l|ccccccc|cccc}
    \toprule
    & \multicolumn{7}{c|}{\textbf{Indoor}} & \multicolumn{4}{c}{\textbf{Outdoor}} \\
    & roomset & herbary & vasary & supermarket & car & greenhouse & grills & garden & statue & 140 & highrise \\ 
    \midrule
    3DGS & -/152 & -/160 & -/98 & -/194 & -/195 & -/194 & -/151 & -/222 & -/220 & -/139 & -/118 \\
    3DGS-MCMC & -/133 & -/172 & -/95 & -/186 & -/163 & -/170 & -/135 & -/160 & -/180 & -/118 & -/109  \\
    $\beta$ splats & -/130 & -/158 & -/79 & -/184 & -/141 & -/165 & -/122 & -/134 & -/198 & -/117 & -/98 \\
    3DGRT & 88/- &86/- & 72/- & 80/- & 72/- & 102/- & 83/- & 83/- & 96/- & 78/- & 66/- \\
    RadFoam & \textbf{123}/- & \textbf{134}/- & \textbf{129}/- & 115/- & 107/- & \textbf{129}/- & 102/- & \textbf{129}/- & \textbf{109}/- & 72/- & 84/- \\
    Radiance Meshes & -/163 & 54/117 & -/\textbf{215} & -/194 & 50/194 & -/163 & -/175 & 22/147 & 90/176 & 32/172 & 9/\textbf{194} \\
    3DGUT & 61/233 & 56/\textbf{225} & 55/182 & 67/\textbf{221} & 53/233 & 110/\textbf{207} & 50/\textbf{227} & 76/\textbf{168} & 64/\textbf{229} & 55/186 & 48/178 \\
    PowerFoam & 113/\textbf{275} & 122/217 & 103/150 & \textbf{128}/185 & \textbf{120}/\textbf{267} & 98/183 & \textbf{105}/188 & 112/131 & 104/225 & \textbf{78}/\textbf{198} & \textbf{97}/136 \\ 
    \bottomrule
\end{tabular}
}
\label{tab:dl3dv_fps}
\end{table}
\begin{table}[h]
\centering
\caption{PSNR for \mipnerf scenes}
\label{tab:m360_psnr}
\resizebox{\textwidth}{!}{
\setlength{\tabcolsep}{5pt}
\begin{tabular}{l|cccc|ccc}
    \toprule
    & \multicolumn{4}{c|}{\textbf{Indoor}} & \multicolumn{3}{c}{\textbf{Outdoor}} \\
    & Room & Counter & Bonsai & Kitchen & Bicycle & Garden & Stump \\ 
    \midrule
    3DGS & 31.43 & 28.96 & 32.20 & 31.14 & 25.21 & 27.34 & 26.58 \\
    3DGS-MCMC & 32.05 & 29.32 & 32.66 & 31.91 & \textbf{25.69} & \textbf{27.81} & \textbf{27.38} \\
    $\beta$ splats & \textbf{32.64} & \textbf{30.14} & \textbf{33.79} & \textbf{32.04} & 25.52 & 27.62 & 26.95 \\
    3DGRT & 30.41 & 28.56 & 31.84 & 30.44 & 24.76 & 26.83 & 26.34 \\
    RadFoam & 30.87 & 28.58 & 32.22 & 31.28 & 24.23 & 26.56 & 25.53 \\
    Radiance Meshes & 30.83 & 28.27 & 31.36 & 30.71 & 24.75 & 26.32 & 26.49 \\
    3DGUT & 31.45 & 29.11 & 32.50 & 31.28 & 24.98 & 27.11 & 26.44 \\
    PowerFoam & 31.23 & 29.81 & 33.32 & 31.42 & 24.18 & 27.10 & 25.35 \\ 
    \bottomrule
\end{tabular}
}
\end{table}

\begin{table}[h]
\centering
\caption{SSIM for \mipnerf scenes}
\label{tab:m360_ssim}
\resizebox{\textwidth}{!}{
\setlength{\tabcolsep}{5pt}
\begin{tabular}{l|cccc|ccc}
    \toprule
    & \multicolumn{4}{c|}{\textbf{Indoor}} & \multicolumn{3}{c}{\textbf{Outdoor}} \\
    & Room & Counter & Bonsai & Kitchen & Bicycle & Garden & Stump \\ 
    \midrule
    3DGS & 0.91 & 0.91 & 0.94 & 0.92 & 0.77 & 0.87 & 0.78 \\
    3DGS-MCMC & \textbf{0.93} & \textbf{0.92} & \textbf{0.95} & \textbf{0.93} & \textbf{0.80} & \textbf{0.88} & \textbf{0.81} \\
    $\beta$ splats & \textbf{0.93} & \textbf{0.92} & \textbf{0.95} & \textbf{0.93} & 0.79 & 0.87 & 0.80 \\
    3DGRT & 0.91 & 0.91 & 0.94 & 0.92 & 0.75 & 0.85 & 0.77 \\
    RadFoam & 0.91 & 0.88 & 0.93 & 0.91 & 0.68 & 0.82 & 0.71 \\
    Radiance Meshes & 0.91 & 0.91 & 0.93 & 0.92 & 0.74 & 0.84 & 0.75 \\
    3DGUT & 0.92 & 0.91 & \textbf{0.95} & \textbf{0.93} & 0.76 & 0.85 & 0.77 \\
    PowerFoam & 0.91 & 0.90 & 0.94 & 0.92 & 0.68 & 0.83 & 0.71 \\ 
    \bottomrule
\end{tabular}
}
\end{table}

\begin{table}[h]
\centering
\caption{LPIPS for \mipnerf scenes}
\label{tab:m360_lpips}
\resizebox{\textwidth}{!}{
\setlength{\tabcolsep}{5pt}
\begin{tabular}{l|cccc|ccc}
    \toprule
    & \multicolumn{4}{c|}{\textbf{Indoor}} & \multicolumn{3}{c}{\textbf{Outdoor}} \\
    & Room & Counter & Bonsai & Kitchen & Bicycle & Garden & Stump \\ 
    \midrule
    3DGS & 0.29 & 0.26 & 0.25 & 0.16 & 0.24 & 0.12 & 0.25 \\
    3DGS-MCMC & 0.26 & 0.24 & 0.24 & 0.15 & \textbf{0.19} & \textbf{0.11} & \textbf{0.20} \\
    $\beta$ splats & 0.26 & 0.23 & 0.24 & 0.15 & 0.20 & 0.12 & 0.22 \\
    3DGRT & 0.30 & 0.26 & 0.25 & 0.17 & 0.26 & 0.15 & 0.26 \\
    RadFoam & 0.26 & 0.25 & 0.24 & 0.16 & 0.31 & 0.17 & 0.29 \\
    Radiance Meshes & 0.32 & 0.27 & 0.26 & 0.18 & 0.31 & 0.18 & 0.30 \\
    3DGUT & 0.30 & 0.26 & 0.25 & 0.16 & 0.24 & 0.14 & 0.26 \\
    PowerFoam & \textbf{0.22} & \textbf{0.20} & \textbf{0.19} & \textbf{0.14} & 0.31 & 0.14 & 0.30 \\ 
    \bottomrule
\end{tabular}
}
\end{table}

\begin{table}[h]
\centering
\caption{Ray tracing / Rasterization FPS for \mipnerf scenes}
\resizebox{\textwidth}{!}{
\setlength{\tabcolsep}{5pt}
\begin{tabular}{l|cccc|ccc}
    \toprule
    & \multicolumn{4}{c|}{\textbf{Indoor}} & \multicolumn{3}{c}{\textbf{Outdoor}} \\
    & Room & Counter & Bonsai & Kitchen & Bicycle & Garden & Stump \\ 
    \midrule
    3DGS & -/306 & -/302 & -/\textbf{472} & -/256 & -/\textbf{332} & -/169 & -/214 \\
    3DGS-MCMC & -/\textbf{467} & -/\textbf{355} & -/414 & -/\textbf{277} & -/184 & -/183 & -/\textbf{231} \\
    $\beta$ splats & -/178 & -/131 & -/182 & -/152 & -/88 & -/128 & -/100 \\
    3DGRT & 84/- & 57/- & 68/- & 33/- & 64/- & 71/- & 69/- \\
    RadFoam & \textbf{193}/- & \textbf{170}/- & \textbf{201}/- & 145/- & \textbf{173}/- & \textbf{216}/- & 164/- \\
    Radiance Meshes & -/196 & -/113 & 92/125 & -/212 & 100/205 & -/136 & 125/125 \\
    3DGUT & 58/253 & 44/243 & 51/239 & 29/171 & 57/97 & 68/127 & 78/109 \\
    PowerFoam & 179/393 & 167/271 & 185/353 & \textbf{150}/274 & 148/209 & 162/\textbf{244} & \textbf{234}/181 \\ 
    \bottomrule
\end{tabular}
}
\label{tab:m360_fps}
\end{table}

\end{document}